\newcolumntype{C}[1]{>{\centering\let\newline\\\arraybackslash\hspace{0pt}}m{#1}}
\newcommand{\modstate}[2]{u_{#2,#1}^p}
\newcommand{\newedits}[1]{{#1}}
\newcommand{\E}{\mathbb{E}}
\newcommand{\norm}[1]{\left\lVert #1 \right\rVert}
\newcommand{\schedule}{{\bf u}}
\newcommand{\msrPolicy}{p}
\newcommand{\nmsrPolicy}{q}
\newcommand{\smsrPolicy}{r}
\newcommand{\msronePolicy}{p\mbox{-}1}
\newcommand{\mpath}{\omega}
\newcommand{\bP}{{\bf P}}
\newcommand{\wc}{\nu}
\newcommand{\msrstate}[2]{u_{#2,#1}^{\msronePolicy}}
\newcommand{\bgraph}[1]{\noindent\textbf{#1}}
\newtheorem*{theorem*}{Theorem}
\newtheorem*{lemma*}{Lemma}
\newtheorem{theorem}{Theorem}
\newtheorem{lemma}[theorem]{Lemma}
\newtheorem{remark}[theorem]{Remark}
\newtheorem{definition}{Definition}
\begin{document}
\title[Improving Multiresource Job Scheduling with Markovian Service Rate Policies]{Improving Multiresource Job Scheduling with Markovian Service Rate Policies}

\author{Zhongrui Chen}
\email{jcpwfloi@cs.unc.edu}
\orcid{0009-0004-8694-4404}
\affiliation{%
 \institution{University of North Carolina at Chapel Hill}
 \streetaddress{201 S Colubmia St}
 \city{Chapel Hill}
 \state{North Carolina}
 \country{USA}
 \postcode{27599}
}
\author{Isaac Grosof}
\email{izzy.grosof@northwestern.edu}
\orcid{0000-0001-6205-8652}
\affiliation{%
 \institution{Northwestern University}
 \streetaddress{2145 Sheridan Rd, Evanston, IL}
 \city{Evanston}
 \state{Illinois}
 \country{USA}
 \postcode{60208}
}
\author{Benjamin Berg}
\email{ben@cs.unc.edu}
\orcid{0000-0002-4147-6860}
\affiliation{%
 \institution{University of North Carolina at Chapel Hill}
 \streetaddress{201 S Colubmia St}
 \city{Chapel Hill}
 \state{North Carolina}
 \country{USA}
 \postcode{27599}
}
\renewcommand{\shortauthors}{Zhongrui Chen, Isaac Grosof, and Benjamin Berg}

\keywords{Scheduling, Queueing Theory, Cluster Scheduling, Multiresource Jobs, Multiserver Jobs, Markovian Service Rate}



\begin{abstract}
Modern cloud computing workloads are composed of \emph{multiresource jobs} that require a variety of computational resources in order to run, such as CPU cores, memory, disk space, or hardware accelerators.
A single cloud server can typically run many multiresource jobs in parallel, but only if the server has sufficient resources to satisfy the demands of every job.
A scheduling policy must therefore select sets of multiresource jobs to run in parallel in order to minimize the \emph{mean response time} across jobs --- the average time from when a job arrives to the system until it is completed.
Unfortunately, achieving low response times by selecting sets of jobs that fully utilize the available server resources has proven to be a difficult problem.

In this paper, we develop and analyze a new class of policies for scheduling multiresource jobs, called Markovian Service Rate (MSR) policies.
While prior scheduling policies for multiresource jobs are either highly complex to analyze or hard to implement, our MSR policies are simple to implement and are amenable to response time analysis.
We show that the class of MSR policies is \emph{throughput-optimal} in that we can use an MSR policy to stabilize the system whenever it is possible to do so.
We also derive bounds on the mean response time under an MSR algorithm that are tight up to an additive constant.
These bounds can be applied to systems with different preemption behaviors, such as fully preemptive systems, non-preemptive systems, and systems that allow preemption with setup times.
We show how our theoretical results can be used to select a good MSR policy as a function of the system arrival rates, job service requirements, the server's resource capacities, and the resource demands of the jobs.
\end{abstract}

\maketitle
\section{Introduction}
\label{sec:intro}
Modern data center servers include a wide variety of computational resources, including CPU cores, network cards, local or disaggregated memory and storage, and specialized accelerators such as GPUs.
Each job request in the data center requires some set of diverse resources in order to run, such as a few CPU cores, some amount of memory, and some amount of storage \cite{maguluri2013scheduling,Maguluri2014HeavyClusters,tirmazi2020borg}.
Each job also has an associated \emph{service requirement}, which describes how long the job must run once it has been allocated its requested resources.
Because a single job requires several types of resources, we refer to these jobs as \emph{multiresource jobs}.
To serve multiresource jobs quickly, a single server generally aims to run many multiresource jobs in parallel.
However, the server also has a limited capacity for each resource.
Hence, a set of multiresource jobs can only be run in parallel if their aggregate demand for each resource is less than the server's capacity for that resource.
For example, in order to run some set of jobs concurrently, the number of total CPU cores demanded by the jobs should not exceed the total number of CPU cores in the server, and the total memory required by these jobs should not exceed the DRAM capacity of the server.
This raises the question of how a scheduling policy should choose sets of multiresource jobs to run in parallel on a data center server.

When scheduling multiresource jobs, the goal is often to reduce job \emph{response times}, the time from when a job arrives to the system until it is completed \cite{tirmazi2020borg,delimitrou2014quasar}.
In particular, this paper aims to design scheduling policies to minimize \emph{mean response time}, the long-run average response time across a stream of arriving multiresource jobs.
Unfortunately, optimizing the response times of multiresource jobs has proven to be difficult \cite{harchol2022multiserver, Ghaderi2016RandomizedCloud,maguluri2013scheduling,Stolyar2004MaxWeightTraffic}.
Much of the literature on multiresource jobs only proves system stability results \cite{Maguluri2014HeavyClusters,Ghaderi2016RandomizedCloud}, while response time analysis has been largely restricted to simple policies such as \emph{First-Come-First-Served} (FCFS) that perform poorly in practice \cite{grosof2023reset}.
Furthermore, many existing results on scheduling multiresource jobs suggest using a MaxWeight policy \cite{Maguluri2014HeavyClusters} that is too complex to run in a real system.
As a result, modern data centers are generally greatly overprovisioned to achieve low response times \cite{tirmazi2020borg}, and rely on heuristic scheduling policies, such as BackFilling policies like First-Fit, that admit no theoretical analysis or guarantees \cite{Grosof2022WCFS:Systems}.

This paper introduces a new class of policies for scheduling multiresource jobs called Markovian Service Rate (MSR) policies.
We prove strong stability results for this class of policies and bounds on mean response time.
Using these bounds, we derive an accurate approximation of the mean response time of an MSR policy.
Best of all, our MSR policies require a single offline optimization step, after which the policy runs in constant time per scheduling decision.

\subsection{The Problem: Scheduling to Reduce Wastage}
The key challenge in scheduling multiresource jobs is that it is difficult to maximize the average utilization of all the server resources. 
Intuitively, finding sets of jobs with high utilization of the server resources requires solving a bin-packing problem.
Hence, na\"ive scheduling policies can greatly underutilize system resources.
We describe the average amount of idle resources under a policy as the \emph{wastage} of the policy.
In general, high wastage leads to higher mean response time, and possibly even instability.

To illustrate this problem, we consider the FCFS policy for multiresource jobs described in \cite{grosof2023reset}.
Whenever a job arrives or departs the system, FCFS repeatedly examines the job at the front of the queue and puts the job in service as long as the server has enough of resources to accommodate it.
This process continues until either the queue is empty or the next job in the queue would exceed the server's capacity for one or more resources.
Although it is a simple, intuitive policy, FCFS can greatly underutilize the available system resources.
For example, consider the case where there are two \emph{types} of jobs, with type-1 jobs  requesting 4 CPU cores and type-2 jobs requesting 2 CPU cores.
Assume that each job must run for one second, the server has 8 CPU cores in total, and arriving jobs alternate between type-1 and type-2 jobs.
In this case, FCFS grabs a 4-core job and a 2-core job from the front of the queue.
The next job in the queue requires four cores, but only two are available, so the remaining two cores in the server will be idle.
Hence, in our example, FCFS always wastes at least two cores and completes at most two jobs per second.
Alternatively, consider a no-wastage policy that alternates between running four 2-core jobs together or two 4-core jobs together.
This no-wastage policy can complete up to three jobs per second.
As a result, the no-wastage policy may perform well in cases where the FCFS is unstable.
We formalize this argument in Section \ref{sec:algorithm}.

FCFS suffers in this example because its high wastage limits the number of jobs that can be run in parallel.
While it is easy to spot better packings of the jobs in this simple example, the complexity of finding low-wastage allocations explodes as the number of job types grows (e.g., some jobs request three cores) and as the number of resource types grows (e.g., jobs request CPUs, GPUs, and DRAM).

\subsection{Our Goal: Improved Low-complexity Scheduling Policies}
To avoid the pitfalls of an FCFS-style policy, the canonical work on scheduling multiresource jobs suggests finding low-wastage allocations by solving complex bin-packing problems repeatedly to continuously determine the next scheduler action \cite{Stolyar2004MaxWeightTraffic,Maguluri2014HeavyClusters}.
These are the so-called \emph{MaxWeight} algorithms that have been shown to maximize the capacity region of the system (i.e., throughput optimality) and achieve empirically low mean response time.
Unfortunately, running MaxWeight in real systems is impractical due to the complexity of the bin-packing problem that must be solved repeatedly.
MaxWeight also preempts jobs frequently, which may not be feasible in all systems.
It is a long-standing open problem to find and analyze policies that are both simple enough to run in practice, but still achieve low mean response time similar to a MaxWeight policy.

The closest the scheduling community has come to solving this problem is the Randomized-Timers approach of \cite{Ghaderi2016RandomizedCloud}, which describes a non-preemptive policy for scheduling multiresource jobs that can compute each scheduling decision with low complexity (complexity is linear in the number of job types).
Randomized-Timers is throughput-optimal, but the policy often results in high mean response time compared to MaxWeight.
Additionally, the Randomized-Timers algorithm is a complicated randomized algorithm and there is no analysis of its mean response time.

Our goal in this paper is to develop policies with the best features of MaxWeight and Randomized-Timers.
That is, we seek low-complexity policies that can achieve low mean response time under a variety of job preemption models.
Our policies should stabilize the system whenever possible, and should be accompanied by strong theoretical guarantees about mean response time.

\begin{table}
    \caption{Comparison of MSR policies to other multiresource job scheduling policies.}
    \begin{tabular}{|c|C{2cm}|c|C{2cm}|C{2.5cm}|}
    \hline
        \small{Policy} & \small{Throughput Optimal} & \small{Low Complexity} & \small{Mean Response Time Analysis} & \small{Preemption Model}\\
    \hline
         \small{FCFS} \cite{grosof2023reset} & & x & x &\small{non-preemptive}\\
         \hline
         \small{First-Fit \cite{Grosof2022WCFS:Systems}} & & x & &\small{non-preemptive}\\
         \hline
         \small{ServerFilling \cite{Grosof2022WCFS:Systems}\footnotemark[1]} & \small{Sometimes} & x & \small{Sometimes} &\small{preemptive}\\
         \hline
         \small{MaxWeight \cite{Maguluri2014HeavyClusters}} & x & & \small{Heavy traffic\footnotemark[2]} & \small{preemptive}\\
         \hline
         \small{Randomized-Timers \cite{Psychas2018RandomizedCloud}} & x & x & &\small{non-preemptive}\\
         \hline
         \textbf{\small{MSR}} & \textbf{x} & \textbf{x} & \textbf{x}& \textbf{\small{general}}\\
         \hline
    \end{tabular}
    \label{tab:comparison}
\end{table}
\footnotetext[1]{ServerFilling is throughput optimal and has a mean response time analysis in the restricted setting of single-dimensional jobs with power-of-two resource requirements, but is not throughput optimal in the full multiresource setting we consider.}
\footnotetext[2]{MaxWeight undergoes state space collapse, which can be used to give a heavy traffic analysis of mean response time \cite{Stolyar2004MaxWeightTraffic}.}

\subsection{Our Solution: MSR Scheduling Policies}
This paper analyzes the \emph{Markovian Service Rate} (MSR) class of policies for scheduling multiresource jobs.
MSR policies use a finite-state Continuous Time Markov Chain (CTMC) to determine which set of jobs to run at every moment in time.
The states of this CTMC, known as the \emph{candidate set} of possible scheduling actions,
as well as the CTMC's transition rates, are determined offline via a one-time optimization step that considers the service requirement distributions and arrival rates for a given workload.
We show how to select an MSR policy that stabilizes the system whenever it is possible to do so.
By deriving additively-tight mean response time bounds for any MSR policy, we also show that a well-chosen MSR policy can run with low complexity and achieve a low mean response time.
Specifically, we show how to construct an MSR policy that is constant-competitive with MaxWeight in a heavy-traffic limit.
Furthermore, we find that MSR policies are comparable to MaxWeight at all loads in simulation.

Our analysis generalizes to a variety of job preemption models.
We consider the cases where jobs are fully preemptible, where jobs are non-preemptible, and where job preemptions are accompanied by setup times.
In each case, we show how to select an appropriate MSR policy and analyze the mean response time under this policy.
Specifically, the contributions of the paper are as follows:
\begin{itemize}[leftmargin=2em]
\item First, Section \ref{sec:algorithm} introduces the class of MSR scheduling policies, which are easy to implement with low computational complexity.
We show that, under a variety of job preemption behaviors, there exists an MSR policy that can stabilize the system if it is possible to do so.
Hence, we say that the class of MSR policies is throughput-optimal under each preemption behavior.
\item Next, in Section \ref{sec:rt}, we analyze the mean response time of the multiresource job system under any MSR policy by decoupling the system into a set of $M/M/1$ Markovian Modulated Service Rate systems. 
By analyzing these decoupled problems, we prove additively tight bounds on the mean queue length of an MSR policy.
We use our bounds to derive a mean response time approximation for MSR policies that is shown to be highly accurate in simulation.
\item In Sections \ref{sec:preemptions} and \ref{sec:eval-borg}, we use our theoretical results to optimize the choice of MSR policy under a variety of job preemption behaviors. 
We validate our choices by comparing MSR policies to policies from the literature in both theory and in simulation using a trace from the Google Borg cluster scheduler.
We find that our MSR policies are competitive with these other policies,
while also being analytically tractable and simple to implement.
Specifically, after performing an offline optimization step, the MSR policies we evaluate run in constant time.
\end{itemize}

Table \ref{tab:comparison} compares our new MSR policies to existing policies for scheduling multiresource jobs.

\section{Prior Work}
\label{sec:prior}

\bgraph{Scheduling in Real-world Systems.}
Multiresource jobs are ubiquitous in datacenters and supercomputing centers where each job requires some amount of several different resources (e.g., CPUs, GPUs, memory, disk space, network bandwidth) in order to run.
As a result, the systems community has extensively considered the problem of scheduling multiresource jobs \cite{Borg:LargeScale,tirmazi2020borg,delimitrou2014quasar, lipari2012slurm, slurm, hindman2011mesos}.
Unfortunately, almost all of this work depends on a complex set of heuristic scheduling policies that do not provide theoretical performance guarantees.


\bgraph{The Multiserver Job Model.}
Prior theoretical work has considered the multiserver job model \cite{harchol2022multiserver} where each job demands multiple servers in order to run.
Several papers have developed throughput-optimal policies for this multiserver job system, such as the MaxWeight Policy \cite{maguluri2013scheduling} and the Randomized-Timers policy \cite{Ghaderi2016RandomizedCloud}.
However, neither of these policies admits a general analysis of mean response time.
Other work  \cite{Grosof2020StabilitySystems,WangWeina2021ZeroJobs,Grosof2022WCFS:Systems} has analyzed the multiserver job system in a variety of special cases such as restricting jobs to belong to one of two classes \cite{Grosof2020StabilitySystems}, considering the system in a variety of scaling limits \cite{WangWeina2021ZeroJobs}, and requiring job server demands to be powers-of-2 \cite{Grosof2022WCFS:Systems}. 
In these special cases, the analysis relies heavily on the fact that there is only one resource type, servers.

\bgraph{The Multiresource Job Model.}
The multiresource job model is a generalization of the multiserver job model.
In this setting, jobs can now request several different types of resources.
Several papers \cite{Psychas2018RandomizedCloud,Ghaderi2016RandomizedCloud, Im2015TightScheduling,Ghaderi2016SimpleCloud,maguluri2013scheduling, chen2024simple}  have considered the multiresource job model.
In \cite{maguluri2013scheduling} the MaxWeight policy is also shown to be throughput optimal in the multiresource job model. 
Similarly, the Randomized-Timers policy \cite{Psychas2018RandomizedCloud,Ghaderi2016RandomizedCloud} is throughput optimal in the multiresource job setting.
Analyses of mean response time in the multiresource job setting are more limited.
For example, \cite{grosof2023reset} recently analyzed the mean response time of a simple FCFS policy in the multiresource job case.
\cite{chen2024simple} proposes a throughput-optimal policy but the response time analysis assumes there are only 2 types of jobs.

\bgraph{Markov Modulated Service Rate Systems.}
Our plan is to analyze MSR policies in the multiresource job setting.
These are policies whose modulating processes are CTMCs.
As a result, the multiresource job systems that we analyze will look like a collection of single-server queueing systems with Markov modulated service rates.
These Markov modulated systems have been studied extensively in the queueing literature \cite{neuts1978m,regterschot1986queue,ciucu2018two,dimitrov2011single,eryilmaz2012asymptotically,grosof2023reset,burman1986asymptotic,falin1999heavy}.
While \cite{burman1986asymptotic,falin1999heavy} first studied the heavy-traffic behavior of a system with Markov modulated \emph{arrival rates},  \cite{dimitrov2011single} generalized their results to allow Markov modulated service rates.
Matrix analytic methods \cite{neuts1978m,ciucu2018two,regterschot1986queue} have also been used to analyze these systems.
More recently, \cite{eryilmaz2012asymptotically,grosof2023reset} have obtained response time bounds in Markov modulated systems by using drift-based techniques.
We will extend these drift-based techniques to help us analyze mean response time in the multiresource job model.

\bgraph{Scheduling in Networks.} While we focus on analyzing multiresource job scheduling, some work on scheduling in networks considers related problems.
Achievable throughput analysis in generalized switches is analogous to trying to find the stability region of multiresource job systems \cite{tassiulas1990stability,markakis2013max,Stolyar2004MaxWeightTraffic}.
Other work considers scheduling in wireless networks, where only a subset of wireless channels can send messages at the same time, leading to bin-packing effects \cite{Ni2012Q-CSMA:Networks,li2011distributed}.
Unfortunately, the policies discussed in this work are too complex for response time analysis \cite{Stolyar2004MaxWeightTraffic,Ni2012Q-CSMA:Networks,li2011distributed}, and/or are computationally expensive \cite{Stolyar2004MaxWeightTraffic}.

\newcommand{\Capacity}{{\bf P}}
\newcommand{\capacity}{P}
\newcommand{\Completions}{{\bf C}}
\newcommand{\completions}{C}
\newcommand{\MP}{{\bf u}}

\section{Model}
\label{sec:model}
Our goal is to model the performance of jobs running in a data center.
When a job is submitted to the system, a load balancer will first select a server to run each job, and each individual server is then responsible for scheduling its assigned jobs \cite{tirmazi2020borg}.
This paper models the behavior of a single server in this setting, deferring the load balancing problem to future work.

We consider a server with $R$ different types of resources.
Let the vector $\Capacity=(\capacity_1, \capacity_2, \cdots, \capacity_R)\in\mathbb{R}^R$ denote the resource capacity of the server, where $\capacity_j$ is the amount of resource $j$ that the server has.
The server is tasked with completing a stream of jobs that arrive to the system over time.
All dispatched jobs are either in service, or are held in a central queue that can be served in any order.

\subsection{Multiresource Jobs}
We refer to the jobs in our model as \emph{multiresource jobs}.
A multiresource job is defined by an ordered pair $({\bf d}, x)$, where ${\bf d}\in\mathbb{R}_+^R$ is a \emph{resource demand vector} denoting the job's requirement for each resource type and $x\in\mathbb{R}_+$ is the job's \emph{service requirement}, denoting the amount of time the job must be run on the server before it is completed.
We assume a job's resource demands are constant throughout its lifetime.
The server can therefore run multiple multiresource jobs in parallel if the sum of the jobs' resource demands does not exceed the server capacity for each resource.

We consider workloads composed of a finite number of \emph{job types} representing the different kinds of applications.
Intuitively, jobs of the same type represent different instances of the same program or different virtual machines of the same instance type.
Hence, jobs of the same type have the same resource demand vector.
Specifically, we assume there are $K$ types of jobs, and that all type-$i$ jobs share the same resource demand row vector ${\bf D}_i$.
Let ${\bf D}=({\bf D}_1, {\bf D}_2, \cdots, {\bf D}_K)$ denote the matrix of these demand vectors.
We assume the service requirements of type-$i$ jobs are sampled i.i.d. from a common service distribution, $S_i\sim\exp(\mu_i)$.
While the system knows the resource demands of each arriving job, we assume that service requirements are unknown to the system.

We define a \emph{possible schedule} to be a set of jobs that can be run in parallel without violating any of the resource constraints of the server.
We describe a possible schedule using a vector $\schedule=(u_1, u_2, \cdots, u_K)\in\mathbb{Z}_+^K$, where $u_i$ denotes how many type-$i$ jobs we are putting into service in parallel.
Because service requirements are exponentially distributed, this description suffices to describe the behavior of the system.
The vector $\schedule$ is a possible schedule if and only if
$\sum_{i=1}^K u_i{\bf D}_i= \schedule {\bf D}\le \Capacity.$
We call the set of all possible schedules the \emph{schedulable set}, $\mathcal{S}$.
Formally,
$\mathcal{S}=\{\schedule\in\mathbb{Z}_+^K\mid \schedule \cdot {\bf D}\le \Capacity\}.$

\subsection{Arrival Process}
We consider the case where a stream of incoming jobs arrives at the system over time.
Let $\{A_i(t), t\ge 0\}$ be the counting process tracking the number of type-$i$ arrivals before time $t$. 
For convenience, we define $A_i(t_1, t_2):= A_i(t_2)-A_i(t_1)$ as the number of arrivals during the time interval $[t_1, t_2)$.
We assume $A_i$ evolves according to a Poisson process with rate $\lambda_i$.
Let $\bm{\lambda}=(\lambda_1, \lambda_2, \cdots, \lambda_K)\in\mathbb{R}_+^K$ denote the vector of job arrival rates and $\Lambda=\sum_{i=1}^K\lambda_i$ denote the total arrival rate.
Finally, let ${\bf A}(t)=(A_1(t), A_2(t), \cdots, A_K(t))$ denote the vector of cumulative arrivals by time $t$.

\subsection{Scheduling Policies}
\label{sec:def-policy}
A \emph{scheduling policy} $p$ in the multiresource job model chooses a possible schedule to use at every moment in time to serve some subset of the jobs in the system.
We describe $p$ via a \emph{modulating process} $\{m^p(t), t\ge 0\}$ with $N^p$ states, where $m: \mathbb{R}_+ \rightarrow \{1, 2, \cdots, N^{\msrPolicy}\}$.
Note that $N^p$ is a property of the policy $p$, not an exponential function.
Let $m^{\msrPolicy}$ be the stationary distribution of the modulating process.
Each state in the modulating process corresponds to a schedule in $\mathcal{S}$.
Let $\schedule^{\msrPolicy}(t)=\left(u_1^{\msrPolicy}(t), u_2^{\msrPolicy}(t), \cdots, u_K^{\msrPolicy}(t)\right)$ be the schedule that corresponds to $m^{\msrPolicy}(t)$ for any $t\ge 0$. For notational convenience, we also define $u_{s, i}^{\msrPolicy}=u^{\msrPolicy}_{i}(t \mid m^p(t)=s)$ for all $s\in\{1,2,\cdots, N^{\msrPolicy}\}$ and $i\in\{1,2,\cdots, K\}$.
Note that the schedules corresponding to $m^{\msrPolicy}(t)$ may, in general, depend on the system state (e.g., queue lengths).

We define a class of simple scheduling policies called \emph{Markovian Service Rate} (MSR) policies in Definition \ref{def:MSR}.
\begin{definition}[MSR Policies]
    \label{def:MSR}
    An MSR policy $\msrPolicy$ is a scheduling policy whose modulating process $\{m^{\msrPolicy}(t), t\ge 0\}$ is an irreducible Continuous Time Markov Chain (CTMC) with $N^{p}<\infty $ states, where the updates of the CTMC are conditionally independent of the underlying queueing system given the system events (e.g. completions and setups).
    Let ${\bf G}^{p}$ denote the infinitesimal generator of the CTMC.
    Let $\schedule^p$ denote the limiting distribution of the schedule used by policy $p$, and let $\E[\schedule^{p}]$ be the steady-state average schedule under policy $\msrPolicy$.
\end{definition}

MSR policies are simple in the sense that they use a finite-state CTMC to choose their schedules at every moment in time.
Furthermore, this CTMC is required to be mostly independent of the system state, although its state transitions can correspond to arrivals and departures in the system.
Specifically, the MSR policy's modulating process is not allowed to be correlated with any queue length information.
Our definition is compatible with the definition of an MMSR system in \cite{grosof2023reset}.
In this paper, we will show that the class of MSR policies is throughput-optimal and can achieve low mean response time while also being simpler to analyze and implement than existing approaches such as MaxWeight and Randomized-Timers.

\subsection{Preemption}
As described in Section \ref{sec:prior}, the choice of scheduling policy for multiresource jobs has historically depended on the preemption behavior of jobs.
To capture this dynamic, we consider jobs with a variety of preemption behaviors.
Each preemption behavior imposes a different set of constraints on what a policy's modulating process can do.
For instance, when jobs are preemptible with no overhead, a modulating process may transition arbitrarily between schedules by preempting jobs as necessary.
However, when jobs are non-preemptible, a modulating process cannot perform transitions that would necessitate preemptions.
For example, a non-preemptive policy cannot directly transition to a schedule with a much smaller number of type-$i$ jobs in service than the current schedule.
This paper considers the case where jobs are \emph{fully preemptible} with no overhead, the case where jobs are \emph{non-preemptible}, and the case where jobs can be preempted by waiting for a \emph{setup time} \cite{gandhi2013exact}.


The class of MSR policies can handle multiresource jobs under each of these preemption models by imposing different restrictions on its modulating process.
For preemptible jobs, the modulating process can be any irreducible, finite-state CTMC.
For non-preemptible jobs, the modulating process must avoid preemptions altogether.
That is, jobs can be added to the schedule if there is sufficient capacity, otherwise jobs must be removed from the schedule one at a time at moments of job completions.
When jobs are preemptible with setup times, the modulating process will spend time in some additional states that correspond to the required setup times.
Details of each of these restrictions on state transitions are discussed formally in Section \ref{sec:algo:xput}.
We refer to MSR policies with \emph{preemptible} jobs, \emph{non-preemptible} jobs, and jobs with \emph{setup times} as \emph{pMSR}, \emph{nMSR}, and \emph{sMSR} policies, respectively.
\subsection{Queueing Dynamics}
\label{sec:queue-dynamics}
We model the queue length as the CTMC $\{{\bf Q}(t)=(Q_1(t), Q_2(t), \cdots, Q_K(t)), t\ge 0\}$, where $Q_i(t)$ tracks the number of type-$i$ jobs in system at time $t$.  The queueing dynamics of the system follow
\begin{equation}{\bf Q}(t+\delta)={\bf Q}(t)+{\bf A}(t, t+\delta)-\hat{\Completions}(t,t+\delta)\label{eq:dyn},\end{equation}
where $\hat{\Completions}(t, t+\delta)$ denotes the number of jobs that completed service during time interval $[t, t + \delta)$. 
We call $\hat{\Completions}(t, t+\delta)$ \emph{actual completions} during time interval $[t, t+\delta)$.

Then, we will relate $\hat{\Completions}(t, t+\delta)$ to the schedule we pick at time $t$, ${\bf u}(t)$.
Note that the server may not be serving $\schedule_i(t)$ type-$i$ jobs at time $t$ if there are not enough type-$i$ jobs in the system.
The actual number of type-$i$ jobs in service is $\min(\schedule_i(t),{\bf Q}_i(t))$.
We let $\Completions(t, t+\delta)$ denote the number of \emph{potential completions} during the interval $[t, t+\delta)$ by simulating the number of jobs that would have completed if exactly ${\bf u}(t)$ jobs were in service.
Consequently, $\Completions(t,t+\delta)\ge \hat{\Completions}(t,t+\delta)$.
We let ${\bf Z}(t, t+\delta)$ denote the \emph{unused service} during time interval $[t, t+\delta)$ where $\hat{\Completions}(t,t+\delta)+{\bf Z}(t, t+\delta)={\Completions}(t,t+\delta)$.

We can then rewrite \eqref{eq:dyn} as
${\bf Q}(t+\delta)={\bf Q}(t)+{\bf A}(t, t+\delta)-\Completions(t, t+\delta)+{\bf Z}(t, t+\delta)$, or
$${\bf Q}(t+\delta)=({\bf Q}(t)+{\bf A}(t, t+\delta)-\Completions(t, t+\delta))^+,$$
where the positive part of $x$, $(x)^+$, is defined as $(x)^+=\max(x, 0)$.
Note that for any job type, $i$, ${\bf Z}_i(t,t+\delta)>0$ implies ${\bf Q}_i(t+\delta)=0$ because the queue has to be empty before unused service occurs.
Because job service requirements are exponentially distributed as $\exp(\mu_i)$, we have
$$\\E[\hat{\Completions}(t,t+\delta)] \leq E[{\Completions}(t,t+\delta)]=\bm{\mu}\otimes \int_{t}^{t+\delta} \schedule(x)dx \qquad \forall \delta > 0,$$
where $\bm{\mu}$ is the \emph{service requirement vector} defined as $\bm{\mu}=(\mu_1, \mu_2, \cdots, \mu_K)$, \newedits{and $\otimes$ denotes the \textit{Hadamard product}}.

\subsection{Performance Metrics}
\label{sec:capacity-region}
\emph{Our goal is to analyze the steady-state behavior of the server as $t \rightarrow \infty$.}
We define
$${\bf Q} \sim \lim_{t\rightarrow\infty} {\bf Q}(t).$$
We are interested in analyzing the \emph{steady-state mean queue length} of the server, $\E[{Q}]$, where
$$\E[Q]=\frac{1}{\Lambda}\langle \bm{\lambda}, \E[{\bf Q}]\rangle=\frac{1}{\Lambda}\langle \bm{\lambda}, \lim_{t\rightarrow\infty} \E[{\bf Q}(t)]\rangle.$$
We can then apply Little's Law to determine the mean response time across jobs, $\E[T]$, as 
$$\E[T]=\frac{\E[Q]}{\Lambda},$$
where $T$ is the \emph{steady-state response time of a job in the system}.
Hence, to analyze $\E[T]$, we will focus on analyzing $\E[Q]$.

While there has been significant work on analyzing the fairness of a scheduling policy by looking at higher moments of response time \cite{im2020dynamic} or  \emph{mean slowdown} \cite{wierman2003classifying}, this work focuses on optimizing the mean response time across jobs.
As we will discuss in Section \ref{sec:pmsr}, we can design policies that both achieve good mean response time and follow mild fairness properties.


Note that the above steady-state limits may not exist in all cases.
We say the system is \emph{stable} under a particular scheduling policy if ${\bf Q}(t)$ is ergodic and the steady-state queue length distribution exists \cite{harchol2013performance}.
When the system is stable, the steady-state distribution of $Q$ exists and $\E[{\bf Q}]<\infty$.
The \emph{capacity region}, $\mathcal{C}^{p}$, of a scheduling policy $p$ is defined as the set of arrival rates such that the system is stable under $p$.
The capacity region of a \emph{system}, $\mathcal{C}$, is the set of all arrival rates such that \emph{some} scheduling policy exists that stabilizes the system.
If a policy's capacity region equals the capacity region of the system, then we say this scheduling policy is \emph{throughput-optimal} \cite{Maguluri2014HeavyClusters}.

The results of \cite{Psychas2018RandomizedCloud,Maguluri2014HeavyClusters} show that the capacity region of any multiresource job system is
\begin{equation}\mathcal{C}=\{\bm{\lambda} \mid (1+\epsilon)\bm{\lambda} \oslash \bm{\mu} \in Conv(\mathcal{S}), \epsilon >0\}\label{eq:capacity}\end{equation}
where $Conv(\mathcal{S})$ is the convex set of points that can be written as a weighted average of the points in $\mathcal{S}$ and $\bm {\lambda}\oslash \bm{\mu}$ denotes Hadamard division between the vectors.
Intuitively, there exists a scheduling policy that can stabilize the system if and only if there exists a convex combination of possible schedules such that the average service rate of type-$i$ jobs exceeds the average arrival rate of type-$i$ jobs for all $i\in[1, K]$.

Here, we define the load on the system in terms of utilization of the server when the arrival rate vector is $\bm{\lambda}$.
Specifically, we define the \emph{system load}, $\rho>0$, to be the smallest number such that 
\begin{equation}
    \frac{\bm{\lambda}\oslash \bm{\mu}}{\rho} \in Conv(\mathcal{S})\label{eq:system-load}.
\end{equation}
Note that if $\bm{\lambda}\in\mathcal{C}$, then $\rho<1$ according to \eqref{eq:capacity}.


Then, \cite{Maguluri2014HeavyClusters} defines the \emph{MaxWeight} policy $\schedule^{\text{MaxWeight}}(t)$ as the solution to the optimization problem
$$\schedule^{\text{MaxWeight}}(t)=\arg\max_{\schedule\in\mathcal{S}} \langle \schedule, {\bf Q}(t)\rangle.$$
This policy is known to be throughput-optimal.
Unfortunately, MaxWeight also requires frequent preemption and repeatedly optimizing over the set of all possible schedules.
Solving this NP-hard optimization problem can be prohibitively expensive.
We aim to devise and analyze throughput-optimal policies that avoid this complexity.

\subsection{Summary of Modeling Assumptions}
\label{sec:assumptions}
To maintain tractability, our model includes a few assumptions that warrant discussion.

\bgraph{Job Sizes and Arrivals.}  We assume that job sizes are exponentially distributed, while real-world job sizes generally follow heavily-tailed distributions.
Fortunately, the analytic techniques in Section \ref{sec:rt} can potentially be extended to handle phase-type job sizes, providing a promising direction for future work.
We describe these extensions in Appendix \ref{sec:appendix:phase-type}.
We have also assumed a stationary Poisson arrival process that is less variable than arrivals in a real system.
We address both of these issues by evaluating our results using real-world data from the Google Borg scheduler in Section \ref{sec:eval-borg}.

\bgraph{Steady State Analysis.}  Our analysis develops queueing-theoretic results about the steady-state behavior of the system.
In a system where the workload changes rapidly and unpredictably, one might be concerned that the system never reaches stationarity.
The results of Section \ref{sec:eval-borg} show that our scheduling policies still perform well over the course of a thirty-day trace of real-world datacenter traffic.
Our analysis should be viewed as complementary to the worst-case and competitive analysis approaches that assume adversarial workloads and result in loose performance bounds \cite{leonardi2007approximating}.

\bgraph{Resource Requirements.} 
We assume that job resource requirements are constant throughout a job's lifetime.
While jobs in some systems can change resource demands in real time \cite{berg2018}, systems like Borg let users reserve resources statically.
While jobs may not fully use their demanded resources, scheduling decisions in reservation-based systems are still largely based on these reservations \cite{tirmazi2020borg}.

We also assume that the jobs' resource requirements are known beforehand.
This assumption is practical in some settings, such as the case where a cloud provider sells virtual machines of different instance types.
In other cases, the demands may have to be predicted by looking at recent runs.

\bgraph{Heterogeneous Resources.}
While the fleet of servers in a modern datacenter is heterogeneous, this work focuses on scheduling within a single server.
Within a server, we assume that resources of the same type are homogeneous.
For some resources, such as GPUs, the performance difference between GPU models can be so huge that they should be handled as separate resource types.
For CPU cores, differences between cores are more easily abstracted away from the scheduler.
For example, Google Borg normalizes CPU requests to \emph{Google Compute Units} so that 1 unit provides the same performance regardless of the underlying hardware \cite{tirmazi2020borg}.
While resources such as memory and disk space can have variable performance characteristics, there are many mechanisms in software and hardware designed to make these resources appear homogeneous as a program runs.

\section{Overview of Results}
The rest of this paper is devoted to analyzing various MSR policies.
Broadly, our theoretical results can be grouped into two categories.
First, in Section \ref{sec:algorithm}, we prove results about the stability of the system under various MSR policies, and show how to select an MSR policy that stabilizes the system under various preemption models.
Second, in Sections \ref{sec:rt} and \ref{sec:preemptions}, we analyze the mean response time under an MSR policy and use this analysis to further guide the selection of a stable MSR policy with low mean response time.
We now provide an overview of these results.

\bgraph{Stability Results.}
First, we prove the throughput-optimality of the class of MSR policies when jobs are fully preemptible.
In Theorem \ref{thm:throughput-optimal}, we show that if it is possible for any policy to stabilize the system, there exists a pMSR policy that will stabilize the system.

\begin{restatable*}{theorem}{throughputoptimal}
    \label{thm:throughput-optimal}
    When scheduling preemptible jobs with no preemption overhead, the class of MSR policies is throughput-optimal. That is, if there exists a scheduling policy that can stabilize a multiresource job system with $K$ job types and arrival rates $\bm{\lambda}$, then there exists an MSR policy, $p$, with $N^p\le K$ candidate schedules that stabilizes the system. Specifically, an MSR policy can stabilize a system with preemptible jobs for any $\bm{\lambda}\in\mathcal{C}$.
\end{restatable*}

This result has a nice geometric intuition.
We prove that choosing a pMSR policy to stabilize the system is equivalent to finding a set of candidate schedules that lie on $Conv(\mathcal{S})$ whose completion rate vectors span the arrival vector in the $K$-dimensional space.
We then argue that, by Carathéodory theorem, there exists a set of at most $K$ schedules that satisfy this condition, meaning there exists an MSR policy, $p$, with $N^p \le K$.


We note that our proof of Theorem \ref{thm:throughput-optimal} is not constructive --- it does not provide a method for selecting a pMSR policy.
Hence, in Corollary \ref{cor:pmsr-construction}, we construct a pMSR policy to stabilize the system by solving a constrained optimization problem over the space of possible modulating processes.
We state this formally in Corollary \ref{cor:pmsr-construction}.

\begin{restatable*}{corollary}{constructionpmsr}
\label{cor:pmsr-construction}
For a pMSR policy $p$, let ${\bf W}^p$ denote the matrix of candidate schedules,
$${\bf W}^p=\begin{pmatrix}
    u^p_{1,1} & u^p_{1,2} &\cdots & u^p_{1, K}\\
    \cdots && \cdots \\
    u^p_{N^p,1} & u^p_{N^p,2} &\cdots & u^p_{N^p, K}
\end{pmatrix}\in\mathbb{Z}^{+}_{N^p\times K}.$$ 
Let $\bm{\pi}^p$ be the limiting distribution of the modulating process $m^p(t)$,
$\bm{\pi}^p=(\pi^p_1, \pi^p_2, \cdots, \pi^p_{N^p})\in\mathbb{R}^{N^p}$. 

To construct a stable pMSR policy, it is necessary to find values ${\bf W}^p$ and $\bm{\pi}^p$ such that
$$\begin{cases}
    {\bf W}^p{\bf D} \le {\bf P} \quad \textnormal{(Capacity constraint)}\\
    \pi^p_i \ge 0, \forall i \quad\textnormal{(Positive fractions)}\\
    \pi^p_1 + \pi^p_2 + \cdots + \pi^p_{N^p} \le 1 \quad\textnormal{(Fractions sum to 1)}\\
    \bm{\pi}^p{\bf W}^p > \bm{\lambda} \oslash \bm{\mu} \quad \textnormal{(Stability condition from Lemma \ref{thm:existence})}\\
    {\bf W}^p \textnormal{ integers}
\end{cases}.$$
Because Theorem \ref{thm:throughput-optimal} guarantees that there exists an MSR policy, $p$, with $N^p\le K$ whenever it is possible to stabilize the system, we can safely assume that $N^p=K$ when solving this optimization problem.

Given the solution to this optimization problem, we can  construct a modulating process with candidate schedules $\bm{W}^p$ and limiting distribution is ${\bm \pi}^p$ by solving a system of $N^p + 1$ linear equations.
\end{restatable*}

Both Theorem \ref{thm:throughput-optimal} and Corollary \ref{cor:pmsr-construction} make the strong assumption that all jobs can be preempted with no overhead at any time.
Fortunately, we can generalize both results to the case where jobs are non-preemptible and the case where preemptions incur setup times.
These generalizations are proven as Theorems \ref{thm:nmsr-existence} and \ref{thm:smsr-existence}, respectively.
Both theorems show how a pMSR policy constructed using Corollary \ref{cor:pmsr-construction} can be adjusted to construct a stable policy under different preemption models, and thus both of these proofs are constructive.

\bgraph{Response Time Analysis Results.} While Corollary \ref{cor:pmsr-construction} constructs a stable MSR policy, there may be many MSR policies that can stabilize the system.
Our goal is therefore to analyze the mean response time of a given MSR policy, and use this analysis to choose an MSR policy with low mean response time.

Our analysis exploits the fact that the MSR policy's modulating process is independent of the system state.
This allows us to decouple the system and individually analyze the number of type-$i$ jobs in the system for each of the $K$ job types.
Specifically, we analyze the expected number of type-$i$ jobs under an MSR policy $p$, $\E[Q^p_i]$, by comparing it to $\E[Q^{\msronePolicy}_i]$, the number of jobs in a single-server queue with Markov-modulated service rates.
We refer to the single-server system as the MSR-1 system.

We first note that the difference between the mean queue lengths of these systems is at most the maximum number of type-$i$ jobs served in parallel. We prove this formally in Theorem \ref{theorem:msr-difference} via a coupling argument.

\begin{restatable*}{theorem}{msrdifference}
    \label{theorem:msr-difference}
    For any job type, $i$, consider an MSR policy $\msrPolicy$ and its corresponding MSR-1 policy $\msronePolicy$. 
    Let $\beta_i^{\msrPolicy}$ denote the maximum number of type-$i$ jobs served in parallel under $\msrPolicy$.  Then,
    $$\E[Q_i^{\msronePolicy}]\le \E[Q^{\msrPolicy}_i] \le \E[Q_i^{\msronePolicy}]+\beta_i^{\msrPolicy}.$$
\end{restatable*}

Next, we leverage the results of \cite{grosof2023reset} to bound $\E[Q^{\msronePolicy}_i]$.
Our bound comprises two terms.
First, there is a primary term that scales as $\Theta\left(\frac{1}{1-\rho_i}\right)$, where $\rho_i$ is the load on the decoupled type-$i$ system.
Second, there is a term corresponding to unused service --- 
 times when the type-$i$ system has unused idle capacity.
We combine this bound with Theorem \ref{theorem:msr-difference} to prove a bound on $\E[Q^p_i]$ in Theorem~\ref{thm:msr-bound} (see Section \ref{sec:bounds} for a full theorem statement along with the required notation).
In addition to our upper and lower bounds on mean queue length, we develop a simple heuristic to approximate $\E[Q^p_i]$ that always falls between our bounds.

We use this analysis to guide our choice of MSR policies under various preemption models in Section \ref{sec:preemptions}.
Ideally, we could select a policy by directly optimizing our response time bounds or approximations subject to the quadratic stability constraints proven in 
Corollary \ref{cor:pmsr-construction}.
However, our bounds and approximations include terms that must be computed numerically once the modulating process is fixed, so it is intractable to optimize over these expressions directly.
We instead suggest using a simple heuristic to choose an MSR policy --- to minimize the maximum value of $\rho_i$ for any job type, $i$.
This heuristic provides several benefits:
\begin{itemize}
    \item The heuristic is linear.  Hence, we can formulate a Mixed Integer Quadratic Constrained Programming (MIQCP) problem that optimizes the heuristic subject to the quadratic constraints that guarantee system stability.  A solution to the MIQCP, as defined in equation \eqref{eq:optimization} of Section \ref{sec:pmsr}, can be used to construct a stable pMSR policy, $p$.
    \item The pMSR policy we select performs well under heavy load. 
    This holds because minimizing our heuristic also minimizes system load, $\rho$, and the primary term in our bound on $\E[Q^p]$ scales as $\Theta(\frac{1}{1-\rho})$.
    Notably, we prove in Theorem \ref{thm:ht} that in the heavy traffic limit as $\rho \to 1$, $p$ is constant-competitive with MaxWeight.  
    \item The heuristic is minimized by balancing the load experienced by each job type.
    Hence, while fairness is not a central concern of this paper, we show that an MSR policy can achieve low overall mean response time without arbitrarily hurting one type of jobs.
\end{itemize}

\bgraph{How to use our results in practice.}
Running a stable MSR policy with low mean response time consists of both an offline and an online step.

First, we initialize the policy by solving an MICQP, \eqref{eq:optimization}, to determine the candidate set and limiting distribution of the modulating process. 
Next, we solve a system of equations to construct a modulating process that meets this specification.
While there are many possible structures we can use for the modulating process, a simple heuristic is to arrange the candidate schedules in a loop structure.
The structure of the modulating process is discussed in Section \ref{sec:preemptions}.
While this step is computationally expensive, it happens offline.

Second, the online portion of the MSR policy runs with complexity at most $O(K)$ by maintaining at most $K$ exponential timers to track transitions of the modulating process.
When using a modulating process with a loop structure, as suggested above, only 1 exponential timer is needed and the online portion of the MSR policy runs with complexity $O(1)$.
Hence, MSR policies move the vast majority of their complexity offline, resulting in efficient online execution.

\section{Markovian Service Rate (MSR) Policies}
\label{sec:algorithm}

In this section, we will prove the class of MSR policies is \emph{throughput-optimal} for scheduling multiresource jobs.
In other words, if there exists a scheduling policy that can stabilize a given system, we can select an MSR policy that stabilizes the system.
The results of this section only provide stability conditions for various MSR policies.
In order to construct performant MSRs in practice, we will need to use the more detailed response time analysis developed in Section \ref{sec:rt}.
We will then show in Section \ref{sec:preemptions} how to use our stability conditions and response time analysis to construct performant MSR policies under a variety of preemption models.

\subsection{Throughput-Optimality of MSR policies}
\label{sec:algo:xput}
Our high-level goal in this section is to show that the class of MSR policies is throughput-optimal.
However, we will see that proving this claim will depend on the preemption behavior of jobs.
Hence, we will prove this style of result for a variety of preemption behaviors in Theorems \ref{thm:throughput-optimal}, \ref{thm:nmsr-existence}  and \ref{thm:smsr-existence}.

We begin by first establishing necessary and sufficient stability criteria for any MSR policy.
Intuitively, the system should be stable whenever the long-run average completion rate of an MSR policy exceeds the arrival rate for every job type.
We prove this stability condition as Lemma \ref{thm:existence}.

\begin{restatable}[Stability Condition]{lemma}{existence}
    \label{thm:existence}
    Given a multiresource system with $K$ job types, arrival vector $\bm{\lambda}$ and service requirement vector $\bm{\mu}$, the system is stable under an MSR policy $p$ if and only if the long-run average completion rate for each job type exceeds the long run arrival rate.
    That is, the system is stable if and only if $\bm{\mu}\otimes \E[\schedule^p]> \bm{\lambda}$, where $\otimes$ denotes the Hadamard product between vectors.
\end{restatable}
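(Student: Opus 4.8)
The crucial structural fact is that the modulating process $\{m^{\msrPolicy}(t)\}$ is an autonomous, finite, irreducible CTMC whose transitions are (given system events) independent of the queue lengths. Hence $X(t):=(\mathbf{Q}(t),m^{\msrPolicy}(t))$ is an irreducible CTMC on $\mathbb{Z}_+^K\times\{1,\dots,N^{\msrPolicy}\}$ with uniformly bounded transition rates, so it is non-explosive, and ``the system is stable under $\msrPolicy$'' is exactly ``$X$ is positive recurrent'' (we may assume $\bm{\lambda}>\mathbf{0}$; types with $\lambda_i=0$ never accumulate and are dropped). Writing $\epsilon_i:=\mu_i(\E[\schedule^{\msrPolicy}])_i-\lambda_i$, where $(\E[\schedule^{\msrPolicy}])_i=\sum_s m^{\msrPolicy}_s\,u^{\msrPolicy}_{s,i}$ is the time-stationary rate at which type-$i$ service slots are offered, the claim is that $X$ is positive recurrent iff $\min_i\epsilon_i>0$. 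I would prove the two directions by a drift argument and a rate-conservation argument, respectively.

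\emph{Sufficiency} ($\min_i\epsilon_i>0\ \Rightarrow$ stable). The difficulty is that the instantaneous drift of $Q_i$ in state $(\mathbf{q},s)$ is $\lambda_i-\mu_i u^{\msrPolicy}_{s,i}$, which is state-dependent and can be positive in a schedule that serves few type-$i$ jobs. I would neutralize the modulation with a bounded corrector: for each $i$, solve the Poisson equation $(\mathbf{G}^{\msrPolicy}w_i)(s)=\mu_i u^{\msrPolicy}_{s,i}-\mu_i(\E[\schedule^{\msrPolicy}])_i$ over the modulating chain, which is solvable with a bounded $w_i$ because the right-hand side has zero mean under $m^{\msrPolicy}$. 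Setting $\tilde v_i(\mathbf{q},s):=q_i+w_i(s)$, a direct generator computation gives $\mathcal{L}\tilde v_i=-\epsilon_i$ whenever $q_i\ge\beta_i:=\max_s u^{\msrPolicy}_{s,i}$, and $\mathcal{L}\tilde v_i$ bounded otherwise. I would then use the quadratic Lyapunov function $V(\mathbf{q},s):=\sum_i \tilde v_i(\mathbf{q},s)^2$; since $\mathcal{L}(f^2)=2f\,\mathcal{L}f+(\text{bounded jump term})$ here, one obtains $\mathcal{L}V(\mathbf{q},s)\le-2\bigl(\min_i\epsilon_i\bigr)\norm{\mathbf{q}}_1+C$ for a finite constant $C$, which is $\le-1$ off the finite set $\{\norm{\mathbf{q}}_1\le M\}$. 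The Foster--Lyapunov criterion for non-explosive CTMCs then yields positive recurrence, and the linear-in-$\norm{\mathbf{q}}_1$ negative drift additionally gives $\E[\mathbf{Q}]<\infty$.

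\emph{Necessity} (stable $\Rightarrow \min_i\epsilon_i>0$). Suppose $X$ is positive recurrent with stationary law $\pi$, so $\E_\pi[\mathbf{Q}]<\infty$. Applying the generator to $f(\mathbf{q},s):=q_i$ and using $\E_\pi[\mathcal{L}f]=0$ (``rate in $=$ rate out'' for type $i$) gives $\lambda_i=\mu_i\,\E_\pi\!\bigl[\min(u^{\msrPolicy}_{m,i},Q_i)\bigr]\le\mu_i(\E[\schedule^{\msrPolicy}])_i$, i.e. $\epsilon_i\ge0$. To upgrade to $\epsilon_i>0$, pick a modulating state $s_0$ with $u^{\msrPolicy}_{s_0,i}\ge1$ (one exists since $\lambda_i>0$ forces $(\E[\schedule^{\msrPolicy}])_i>0$); a state with $Q_i=0$ and $m^{\msrPolicy}=s_0$ is reachable (drain all type-$i$ jobs, then steer the modulating chain to $s_0$), hence carries positive stationary mass, and there $\min(u^{\msrPolicy}_{s_0,i},Q_i)=0<u^{\msrPolicy}_{s_0,i}$, so the inequality $\E_\pi[\min(u^{\msrPolicy}_{m,i},Q_i)]\le(\E[\schedule^{\msrPolicy}])_i$ is strict, forcing $\lambda_i<\mu_i(\E[\schedule^{\msrPolicy}])_i$. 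Since this holds for every $i$, $\bm{\mu}\otimes\E[\schedule^{\msrPolicy}]>\bm{\lambda}$.

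\emph{Main obstacle.} The delicate step is the sufficiency direction's treatment of the Markov modulation: the one-step drift of $Q_i$ is not negative state-by-state, and everything hinges on solving the modulating chain's Poisson equation to obtain the \emph{bounded} corrector $w_i$ --- the standard drift technique for Markov-modulated queues (cf. the references of Section~\ref{sec:prior}), but the conceptual heart of the argument. Given $w_i$, the reason a \emph{quadratic} Lyapunov function is needed, and the reason the multi-type structure causes no extra trouble, is that $\mathcal{L}(\tilde v_i^2)\approx-2\epsilon_i q_i$ scales with the queue length, so even a single large queue produces negative drift dominating the $O(1)$ contributions of the near-empty queues, whereas a linear function would not. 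The necessity direction is comparatively routine once $\E_\pi[\mathbf{Q}]<\infty$ is in hand; alternatively one could quote the known stability criterion for Markov-modulated $M/M/1$ queues directly from the literature cited in Section~\ref{sec:prior}.
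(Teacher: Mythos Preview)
Your proposal is correct and in fact more carefully argued than the paper's own proof. Both directions agree in spirit with the paper, but differ in execution. For sufficiency, both use a quadratic Lyapunov function and Foster--Lyapunov, but the paper works only with $V(\mathbf{q})=\|\mathbf{q}\|_2^2$ (a function of $\mathbf{q}$ alone) and at the key step replaces the instantaneous completion rate with the \emph{stationary-averaged} rate $\bm{\mu}\otimes\E[\schedule^p]$, effectively treating the modulating chain as already mixed; this is short but somewhat informal, since the true CTMC is the joint process $(\mathbf{Q},m^p)$ and its instantaneous drift depends on the current modulating state. Your Poisson-equation corrector $w_i$ is precisely the device that makes the argument rigorous state-by-state on the joint chain; it is also worth noting that your $w_i$ is, up to sign and scaling, the same object as the paper's \emph{relative completions} $\Delta$ introduced later in Section~\ref{sec:rt} (Lemma~\ref{lemma:solve-delta} is exactly your Poisson equation), so your route unifies the stability proof with the later response-time machinery. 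For necessity, the paper gives only a one-line assertion that $\E[Q_i^p]=\infty$ when the condition fails; your rate-conservation argument is a genuine proof, though be careful that ``positive recurrent $\Rightarrow \E_\pi[\mathbf{Q}]<\infty$'' is not automatic and needs either a moment bound from the drift or a direct appeal to the sample-path domination $Q_i^{\msronePolicy}(t)\le Q_i^{\msrPolicy}(t)$ of Theorem~\ref{theorem:msr-difference} together with the known criterion for Markov-modulated $M/M/1$ queues (the paper also elides this point).
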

\begin{proof}
This proof uses the Foster-Lyapunov criteria \cite{Stolyar2004MaxWeightTraffic} to establish stability conditions for the system.
We use the L2-norm of the queue length vector ${\bf Q}(t)$ as our Lyapunov function and show that the required drift conditions are satisfied if and only if the average completion rate of jobs is greater than the average arrival rate.
See Appendix \ref{sec:stability} for details.
\end{proof}
Lemma \ref{thm:existence} applies to any MSR policy regardless of the preemption model being considered.
Hence, we can use these stability criteria to reason about throughput-optimality under a variety of preemption behaviors in the following sections.

\bgraph{Fully Preemptible Jobs.}
Next, we show that Lemma \ref{thm:existence} suffices to prove throughput-optimality when scheduling fully preemptible jobs.
In this case, we assume that jobs can be preempted at arbitrary times, arbitrarily frequently, with no associated overhead.
As such, we place no constraints on the modulating process.
We call an MSR policy for scheduling fully preemptible multiresource jobs a \emph{pMSR policy}.
We prove the throughput-optimality of pMSR policies in Theorem \ref{thm:throughput-optimal}.

\throughputoptimal
    \begin{proof}
        As described in Section \ref{sec:capacity-region}, the capacity region $\bm{\lambda}\in\mathcal{C}$ is known and is given by \eqref{eq:capacity}. 
        Equation \eqref{eq:capacity} tells us that $\bm{\lambda} \oslash \bm{\mu}$ must lie strictly on the interior of the convex hull of $\mathcal{S}$.
        Hence, for some $\epsilon'>0$, $(1+ \epsilon')\bm{\lambda} \oslash \bm{\mu}$ lies on the border of the convex hull of $\mathcal{S}$, and this point can be written as a convex combination of schedules on the border of the convex hull of $\mathcal{S}$.
        By Carathéodory's Theorem \cite{fabila2017caratheodory}, there exists at least one such convex combination consisting of at most $K$ schedules.
        Our goal is to set the transitions of the modulating process so that the stationary probabilities of being in each state match the weights of the convex combination.
        Finding appropriate transition rates then amounts to solving a system of $K-1$ linearly independent equations and $K(K-1)$ unknowns.
        This guarantees that we can construct a pMSR policy, $p$, with $N^p\leq K$ such that $\E[\schedule^p] = (1+\epsilon')\bm{\lambda} \oslash \bm{\mu}$ for some $\epsilon'>0$.
        By Lemma \ref{thm:existence}, $\{{\bf Q}^p(t), t\ge 0\}$ is positive recurrent and the system is stable.
\end{proof}

Crucially, Theorem \ref{thm:throughput-optimal} says that a stable MSR policy only needs $K$ states in its modulating process, confirming that pMSR policies provide a low-complexity solution for scheduling multiresource jobs.
At any moment in time, the only information the policy needs to maintain in order to stabilize the system are $K$ states comprising the candidate set, the $K(K-1)$ transition rates between these states, and the time and direction of the next transition.
By contrast, policies like MaxWeight may consider an exponential number of schedules at every scheduling decision.
The modulating process for an MSR policy can also be chosen offline during a one-time optimization step that considers the arrival rates, service rates, and resource demands of each job type.

While Theorem \ref{thm:throughput-optimal} tells us that some stable pMSR policy exists whenever some stable policy exists, it does not tell us how to construct a stable pMSR policy. We provide steps to construct a stable pMSR policy, $p$, in Corollary \ref{cor:pmsr-construction}.

\constructionpmsr

\begin{proof}
The capacity constraint ${\bf W}^p{\bf D}\le {\bf P}$ suggests all candidate schedules are in the schedulable set $\mathcal{S}$. Lemma \ref{thm:existence} implies that if $\bm{\pi}^p{\bf W}^p > \bm{\lambda} \oslash \bm{\mu}$, then the system is stable. By Theorem \ref{thm:throughput-optimal}, the optimization problem is feasible.
\end{proof}
While we provide constraints for the optimization problem in Corollary \ref{cor:pmsr-construction}, we have no objective function here because the results of Lemma \ref{thm:existence} did not tell us how to compare two stable pMSR policies. In Section \ref{sec:rt}, we develop analytical tools to optimize over the space of stable MSR policies.

\bgraph{Non-preemptible Jobs.}
We refer to an MSR policy for scheduling non-preemptible jobs as an \emph{nMSR policy}.
Unlike pMSRs, an nMSR policy's modulating process can no longer change states arbitrarily.
Specifically,
we must ensure that no preemptions are required for any state transitions.

Solely increasing the number of type-$i$ jobs scheduled does not require preemptions, because this action simply reserves or reclaims some free resources.
Decreasing the number of type-$i$ jobs scheduled, however, generally requires waiting for some running type-$i$ jobs to complete.
We therefore limit an nMSR policy to decrease the number of scheduled jobs of any type by at most one each time the modulating process changes states.
Because moving to a schedule with one fewer type-$i$ job in service may require completing a type-$i$ job, an nMSR policy $\nmsrPolicy$'s modulating process can transition with rate at most $\mu_i u^{\nmsrPolicy}_i(t)$ in this case.
While these added restrictions clearly break our proof of Theorem \ref{thm:throughput-optimal},
we will argue that the class of nMSR policies remains throughput-optimal.

The key to our argument is to design modulating processes with two types of states: \emph{working states} and \emph{switching states}.
Working states correspond to schedules that actually help stabilize the system.
Switching states exist solely to facilitate switching between working states without preemptions.
It is straightforward to see that, for any pair of working states, there exists a finite sequence of switching states that permit transitions from one working state to the other without preemptions.
For example, a sequence of switching states could first reduce the number of type-1 jobs schedules one at a time, then reduce the number of type-2 jobs, etc.
We define a \emph{switching route} to be a particular sequence of switching states connecting two working states.

To prove the throughput optimality of the class of nMSR policies, we argue that any pMSR policy, $\msrPolicy$, that stabilizes the system can be augmented with some switching routes to create an nMSR policy that stabilizes the system.
We prove this claim in Theorem \ref{thm:nmsr-existence}.
\begin{restatable}{theorem}{nmsrexistence}
    \label{thm:nmsr-existence}
    When scheduling non-preemptible jobs, the class of nMSR policies is throughput-optimal. That is, if there exists a policy that stabilizes a multiresource job system with $K$ job types and arrival rates $\bm{\lambda}$, then there exists an nMSR policy, $\nmsrPolicy$, with $N^{\nmsrPolicy}\leq K$ working states that stabilizes the system. Specifically, an nMSR policy can stabilize a system with non-preemptible jobs when $\bm{\lambda}\in\mathcal{C}$.
\end{restatable}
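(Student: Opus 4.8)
The plan is to start from the pMSR policy guaranteed by Theorem~\ref{thm:throughput-optimal} and ``thicken'' it into an nMSR policy by inserting preemption-free switching routes between its working schedules, then tuning the transition rates so that the time spent in switching states becomes negligible. Concretely, suppose $\bm{\lambda}\in\mathcal{C}$. By Theorem~\ref{thm:throughput-optimal} there is a pMSR policy $\msrPolicy$ with $N^{\msrPolicy}\le K$ candidate schedules ${\bf W}^{\msrPolicy}$ and limiting distribution $\bm{\pi}^{\msrPolicy}$ (summing to $1$ over its states) such that $\E[\schedule^{\msrPolicy}]=\bm{\pi}^{\msrPolicy}{\bf W}^{\msrPolicy}=(1+\epsilon')\,\bm{\lambda}\oslash\bm{\mu}$ for some $\epsilon'>0$. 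I would take the working states of $\nmsrPolicy$ to be exactly these $N^{\msrPolicy}$ candidate schedules, which makes the claimed bound $N^{\nmsrPolicy}\le K$ on the number of working states immediate.

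Next I would construct the switching routes. Given two working schedules $w_j,w_k$, a preemption-free route from $w_j$ to $w_k$ is obtained by first decreasing, one unit at a time, every coordinate $i$ with $w_{j,i}>w_{k,i}$ down to $w_{k,i}$, and then increasing every coordinate with $w_{j,i}<w_{k,i}$ up to $w_{k,i}$. Every intermediate schedule is coordinatewise $\le w_j$ during the decreasing phase and $\le w_k$ during the increasing phase, hence lies in $\mathcal{S}$ since $\mathcal{S}$ is downward closed; the decreasing steps require only a type-$i$ completion, so they can be driven by an exponential timer of rate $\mu_i u_i^{\nmsrPolicy}(t)$, which respects the nMSR rate cap and is independent of the queue lengths, and never forces a preemption, while the increasing steps merely reserve free capacity. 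Each switching state has mean sojourn at most $1/\mu_i\le 1/\min_i\mu_i$, and each route contains at most $\sum_i \max_s u^{\nmsrPolicy}_{s,i}$ states, so every route is traversed in bounded expected time $\bar{T}_{sw}<\infty$. Linking the working states with such routes along a Hamiltonian cycle $w_1\to w_2\to\cdots\to w_{N^{\msrPolicy}}\to w_1$ (plus any extra routes one wishes to add) yields an irreducible finite-state CTMC, i.e.\ a valid nMSR modulating process.

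It remains to verify the stability condition of Lemma~\ref{thm:existence}, namely $\bm{\mu}\otimes\E[\schedule^{\nmsrPolicy}]>\bm{\lambda}$. I would introduce a scaling parameter $\theta>0$ multiplying every rate by which the chain \emph{leaves} a working state (the rate at which it begins draining that working schedule), while leaving the within-route rates fixed. As $\theta\to 0$: (i) the embedded jump chain on the working states is unaffected, so along the cycle each working state is visited with equal frequency, while its mean sojourn is $\Theta(1/\theta)$; choosing the exit rate out of working state $j$ proportional to $\theta/\pi^{\msrPolicy}_j$ then makes the conditional stationary distribution on the working states converge to $\bm{\pi}^{\msrPolicy}$; (ii) each working-state visit is followed by exactly one route traversal of expected length $O(1)$, so the long-run fraction of time spent in switching states is $\Theta(\theta\,\bar{T}_{sw})\to 0$. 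Since every schedule, in particular those used in switching states, serves a nonnegative number of jobs of each type, $\E[\schedule^{\nmsrPolicy}]\ge (1-o(1))\,\E[\schedule^{\msrPolicy}]=(1-o(1))(1+\epsilon')\,\bm{\lambda}\oslash\bm{\mu}$ coordinatewise as $\theta\to 0$. Hence for all sufficiently small $\theta$ we obtain $\E[\schedule^{\nmsrPolicy}]>\bm{\lambda}\oslash\bm{\mu}$, and Lemma~\ref{thm:existence} gives that $\{{\bf Q}^{\nmsrPolicy}(t)\}$ is positive recurrent, establishing stability whenever $\bm{\lambda}\in\mathcal{C}$.

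\textbf{Main obstacle.} The delicate step is (i): controlling the limiting conditional distribution on the working states while respecting the hard constraint that a switching transition decreasing type-$i$ jobs has rate at most $\mu_i u_i^{\nmsrPolicy}(t)$ and that the modulating process remains conditionally independent of ${\bf Q}$. Scaling only the \emph{exit} rates from working states is what resolves this: deciding when to begin draining a working schedule never needs a preemption and never needs to inspect ${\bf Q}$, and since the within-route rates are untouched the route traversal times stay bounded and the rate caps stay satisfied. Everything else (downward-closedness of $\mathcal{S}$, boundedness of $\bar{T}_{sw}$, and continuity of $\E[\schedule^{\nmsrPolicy}]$ in $\theta$) is routine.
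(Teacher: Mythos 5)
Your proposal is correct and follows essentially the same route as the paper's proof in Appendix~\ref{app:nmsr}: take the working states from the stabilizing pMSR policy of Theorem~\ref{thm:throughput-optimal}, connect them by preemption-free switching routes, scale the exit rates out of working states by a small parameter, and use a renewal-reward argument to show the fraction of time in switching states vanishes so that Lemma~\ref{thm:existence} applies. The only differences are cosmetic --- you make the switching routes and the Hamiltonian-cycle structure explicit (with exit rates tuned to reproduce $\bm{\pi}^{\msrPolicy}$), whereas the paper keeps the pMSR generator among working states, scales it uniformly by $\alpha$, and leaves the routes arbitrary.
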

\begin{proof}
Our proof first invokes Theorem \ref{thm:throughput-optimal} to claim the existence of a $pMSR$ policy, $\msrPolicy$, with $N^{\msrPolicy}\leq K$ that stabilizes the system.
When scheduling non-preemptive jobs, we would like our stationary distribution to be close to that of $\msrPolicy$.
To accomplish this, we let the working states of $\nmsrPolicy$ equal the states of $m^{\msrPolicy}$ and assume arbitrary switching routes.
We then fix the probability that $\nmsrPolicy$ is in each working state \emph{given} that the modulating process in some working state to equal the stationary distribution of $m^{\msrPolicy}$.
By scaling all the transition rates out of the working states proportionally to some switching rate, $\alpha$, we control the amount of time $\nmsrPolicy$ spends in the working states without changing our chosen conditional probabilities.
We then use a Renewal-Reward argument to show that setting $\alpha$ sufficiently low makes the fraction of time spent in any of the working states arbitrarily close to 1.
This implies $\E[\schedule^{\nmsrPolicy}] \approx \E[\schedule^{\msrPolicy}]$.
The full proof of this claim is given in Appendix \ref{app:nmsr}.
\end{proof}

Our proof of Theorem \ref{thm:nmsr-existence} demonstrates the value of preemption.
In order to stabilize the system, our nMSR policies must move through switching states whose corresponding schedules may be highly inefficient.
To avoid spending too much time in these inefficient switching states, an nMSR policy's modulating process must change between working states on a much slower timescale than a pMSR policy, which can directly move between efficient working states.
We will measure the impact of this slower modulation both in theory and in simulation in Sections \ref{sec:rt} and \ref{sec:preemptions}, respectively.

\bgraph{Preemption with Setup Times.}
Our model of MSR policies is even general enough to capture the case where jobs can be preempted, but where preemptions incur a \emph{setup time}.
Here, a setup time represents a period of time during preemptions where server resources are occupied but not actively used by any job.
Setup times generally occur in real systems when the job that is being preempted must be checkpointed and its resources must be reclaimed before the preempting job can start running.
To model this process, we associate every preemption with a i.i.d. drawn random setup time of length $\Gamma \sim \mbox{exp}(\gamma)$.
We refer to $\gamma$ as the \emph{setup rate} of the system.
When a job is preempted, it holds its demanded resources for $\Gamma$ seconds before the preempting job is put in service.
Neither the preempted job nor the preempting job can complete during this setup time.

We call an MSR policy for scheduling jobs with setup times an \emph{sMSR policy}.
The class of sMSR policies shares the challenges with nMSR policies --- the modulating process must be constrained to model preemption overheads.
Hence, we can again divide an sMSR policy's modulating process into working and switching states.
In sMSRs, the transition rates between switching states will depend on the rate $\gamma$ rather than service rates.
Furthermore, the schedules corresponding to each switching state must be altered for jobs that are experiencing setup times but not actually running.
While we will delve more deeply into the performance of sMSR policies in Section \ref{sec:preemptions}, we will begin here by proving the throughput optimality of sMSR policies in Theorem \ref{thm:smsr-existence}.
\begin{theorem}
    \label{thm:smsr-existence}
    When scheduling jobs with setup costs, the class of sMSR policies is throughput-optimal. That is, if there exists a scheduling policy that can stabilize a multiresource job system with $K$ job types and arrival rates $\bm{\lambda}$, then there exists an sMSR policy, $\smsrPolicy$, with $N^{\smsrPolicy}\le K$ working states that stabilizes the system. Specifically, an sMSR policy can stabilize a system with preemptible jobs when $\bm{\lambda}\in\mathcal{C}$.
\end{theorem}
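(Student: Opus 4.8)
The plan is to follow the template of the proof of Theorem~\ref{thm:nmsr-existence} almost verbatim, with the service-completion-driven switching states replaced by setup-time-driven ones. First I would invoke Theorem~\ref{thm:throughput-optimal} to obtain a pMSR policy $\msrPolicy$ with $N^{\msrPolicy}\le K$ states whose limiting schedule satisfies $\E[\schedule^{\msrPolicy}] = (1+\epsilon')\bm{\lambda}\oslash\bm{\mu}$ for some $\epsilon'>0$; this is the ``target'' behaviour we want the sMSR policy to approximate. I then build $\smsrPolicy$ by declaring the $N^{\msrPolicy}$ states of $m^{\msrPolicy}$ to be the \emph{working states} of $\smsrPolicy$, and inserting between each ordered pair of working states a \emph{switching route} made of \emph{setup states}. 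Since the working states coincide with the states of $m^{\msrPolicy}$, we immediately get $N^{\smsrPolicy}\le K$ working states, and $\smsrPolicy$'s modulating process is a finite, irreducible CTMC by construction.

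Next I would describe the switching routes concretely. To move from a working schedule $\schedule$ to a working schedule $\schedule'$, the policy preempts the jobs of each type $i$ with $u'_i<u_i$; each preempted job holds its demand vector for an independent $\exp(\gamma)$ setup time before releasing its resources, after which the type-$j$ jobs with $u'_j>u_j$ can be added (each also after its own setup stage). This yields a finite sequence of setup states whose sojourn times are built from $\exp(\gamma)$ stages and whose associated schedules account for resources that are held by jobs making no progress. One checks that every such intermediate schedule respects the capacity constraint ${\bf P}$ --- at no point do we occupy more resources than are held by one of the two endpoint schedules --- so the switching routes are legitimate. The detailed bookkeeping of which jobs are held during which stage is routine and I would defer it to an appendix.

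The core of the argument is then a Renewal-Reward computation identical in spirit to the nMSR case. I scale every transition rate out of the working states by a common factor $\alpha>0$, which keeps the \emph{conditional} limiting distribution over working states equal to $\bm{\pi}^{\msrPolicy}$. Regenerating at returns to a fixed working state, the expected time spent in working states per cycle is $\Theta(1/\alpha)$, while the expected time spent in setup states per cycle is at most a constant (a bounded number of $\exp(\gamma)$ stages, with $\gamma$ a fixed system parameter); hence the long-run fraction of time $\smsrPolicy$ spends in working states tends to $1$ as $\alpha\to 0$. Since $\E[\schedule^{\smsrPolicy}]$ is the time-average of the working-state schedules (weights converging to $\bm{\pi}^{\msrPolicy}$) and the setup-state schedules (total weight converging to $0$), it converges to $\E[\schedule^{\msrPolicy}] = (1+\epsilon')\bm{\lambda}\oslash\bm{\mu}$. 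Choosing $\alpha$ small enough therefore gives $\bm{\mu}\otimes\E[\schedule^{\smsrPolicy}] > \bm{\lambda}$, and Lemma~\ref{thm:existence} yields stability.

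The main obstacle I anticipate is the one that already makes the nMSR proof delicate: verifying that the intermediate setup-state schedules are feasible and that each switching excursion has \emph{uniformly bounded} expected length, so that the $\Theta(1/\alpha)$-versus-$O(1)$ comparison of cycle times is rigorous and the weight placed on setup states genuinely vanishes. A secondary wrinkle, specific to the setup-time model, is that $\gamma$ is exogenous and cannot be taken large --- but this is harmless, since the vanishing setup weight is obtained by making working-state transitions rare (small $\alpha$), not by making setups fast.
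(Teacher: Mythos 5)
Your proposal is correct and follows essentially the same route as the paper: the paper's proof of this theorem simply observes that the nMSR argument carries over with setup-driven switching routes replacing completion-driven ones, exactly as you describe, using the same working-state/switching-state decomposition, the same scaling by $\alpha$, and the same Renewal-Reward argument to make the time in setup states vanish before invoking Lemma~\ref{thm:existence}. Your added observation that $\gamma$ is exogenous but harmless (since the vanishing setup weight comes from small $\alpha$, not fast setups) is a correct and useful clarification the paper leaves implicit.
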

\begin{proof}
    This proof is nearly identical to the proof of Theorem \ref{thm:nmsr-existence}.  While the exact structures of the switching routes between working states differ between the nMSR and sMSR cases, in both cases we choose working states and transitions in order to mirror the behavior of some pMSR policy, $\msrPolicy$, that stabilizes the system.  We use the same argument with the switching rate, $\alpha$, to show that we can construct an sMSR policy based on $\msrPolicy$ that also stabilizes the system.
\end{proof}

Taken together, the throughput optimality theorems of this section show that there is a large design space of MSR policies that can stabilize a multiresource job system.
However, our current tools for analyzing these MSR policies can only tell us whether the system is stable.
To choose a particular MSR policy that achieves low mean response time, we need to develop a stronger analysis of the performance of an MSR policy.
We develop this analytic framework in Section \ref{sec:rt}.

\section{Response Time Bounds for MSR Algorithms}
\label{sec:rt}

Although we have shown that MSR policies are useful for stabilizing the system, it is still unclear how to select an MSR policy that achieves low mean response time.
Hence, in this section, we prove bounds on the mean response of an arbitrary MSR policy $p$.
We will also show how these bounds can be used to provide a closed-form prediction of mean response time under an MSR policy that is highly accurate in practice.
Our prediction formula can then be used to guide the selection of a performant MSR policy.
Specifically, we will use our response time analysis to select a stable MSR policy with low mean response time under a variety of job preemption models in Section \ref{sec:preemptions}.

Our analysis begins by decoupling the original system with $K$ job types into $K$ separate systems.
We will analyze the behavior of these decoupled systems one at a time by comparing the multiresource job system to a simpler single-job-at-a-time system with modulated service rates.
This comparison allows us to prove bounds on the mean queue length for each type of jobs in Theorem \ref{thm:msr-bound}.
We then use our bounds to derive an accurate mean queue length \emph{approximation} in Section \ref{sec:predict}.


\subsection{Decoupling the Multiresource Job System}
We begin by noting that there are two equivalent perspectives from which one can view the multiresource job system.
Perspective 1, as described in Section \ref{sec:model}, is that there is a central queue holding multiresource jobs of all types.
Perspective 2 is that we can imagine $K$ separate queues, one for each type of jobs.
In general, the service rate of the type-$i$ queue can depend on the numbers of jobs in all other queues, so the queues in perspective 2 do not evolve independently, limiting the usefulness of this perspective.

Under an MSR policy, however, Definition \ref{def:MSR} tells us that the number of type-$i$ jobs scheduled by an MSR policy $\msrPolicy$ depends only on the state of the modulating process.
As a result, when considering MSR policies, it is simpler to analyze the expected length of each queue in perspective 2 than it is to analyze the central queue in perspective 1.
Specifically, consider the type-$i$ queue from perspective 2  under an MSR policy $p$.
This queue has Poisson arrivals with rate $\lambda_i$ and jobs whose service requirements are distributed as $\mbox{exp}(\mu_i)$.
The number of jobs allowed in service at time $t$ evolves according to the CTMC $\{u_i^p(t)\}$, which is assumed to be independent of the overall state of the system.
Hence, we call this the \emph{decoupled system} for type-$i$ jobs.
In this section, we will bound the mean queue length in each decoupled system, $\E[Q_i]$.
See Figure \ref{fig:queues} for an example comparing perspective 1 and perspective 2 under an MSR policy.
\begin{figure}[ht]
    \begin{subfigure}[b]{.48\linewidth}
        \centering
        \includegraphics[width=\linewidth,clip,trim={0 7cm 14cm 0}]{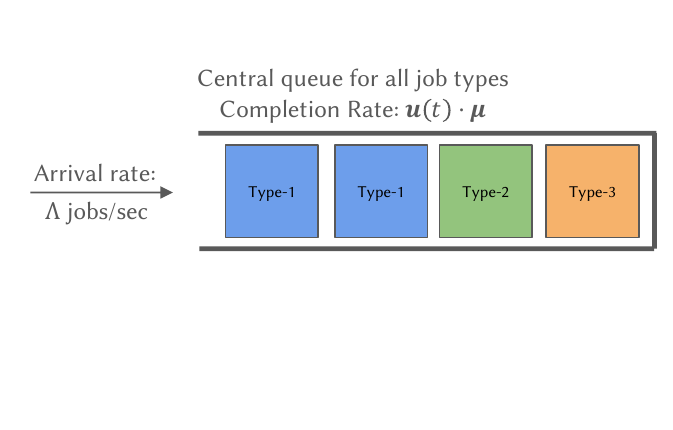}
        \captionsetup{width=\linewidth}
        \caption{Perspective 1: Central queue}
    \end{subfigure}
    \begin{subfigure}[b]{.48\linewidth}
        \centering
        \includegraphics[width=\linewidth,page=2, clip, trim={0 7cm 14.5cm 0}]{images/queues.pdf}
        \captionsetup{width=\linewidth}
        \caption{Perspective 2: Decoupled systems}
    \end{subfigure}
    \caption{Two views of the multiresource job system under an MSR policy. While arriving jobs are stored in a central queue, MSR policies allow us to analyze jobs of each type separately.}
    \label{fig:queues}
\end{figure}

To bound $\E[Q_i]$, we will compare the decoupled system to a similar, simpler system, the \emph{$M/M/1$ Markovian Service Rate system} (MSR-1) system.
A \emph{corresponding MSR-1 system} for type-$i$ jobs is a single server system that serves jobs one at a time and where the service rate of the server varies according to a finite-state CTMC.
Hence, we can construct a corresponding MSR-1 system that looks highly similar to our decoupled system, except that the decoupled system runs multiple type-$i$ jobs in parallel.
We can then analyze the MSR-1 system, and use our analysis to derive bounds on the decoupled system for each job type.
Formally, we define the MSR-1 system as follows.

\begin{restatable}{definition}{decoupledsystem}
   \label{def:msr-1}
   A corresponding MSR-1 system for type-$i$ jobs is an $M/M/1$ queueing system with Markov-modulated service rates.
   Let $\lambda_i$ be the arrival rate of the system.
   The corresponding MSR-1 system shares the same modulating process, $\{m^{\msrPolicy}(t), t\ge 0\}$ as the original MSR system.
   We denote this policy derived from $\msrPolicy$ for the MSR-1 system as $\msronePolicy$.
   Hence, at time $t$, the corresponding MSR-1 system completes jobs at a rate of $u_i^{\msronePolicy}(t)=\mu_i u_i^{\msrPolicy}(t)$.
   Following the definition, $\msrstate{i}{s}=\mu_i\modstate{i}{s}$.
\end{restatable}

Let $\{Q_i^{\msronePolicy}(t), t\ge 0\}$ denote the queue length process of the corresponding MSR-1 system under $\msronePolicy$.
Our goal is to analyze the mean queue length under $\msronePolicy$ and then use this analysis to provide queue length bounds for the original system under MSR policy $\msrPolicy$.

The intuition behind our argument is that, when there are sufficiently many jobs in the system, the MSR and MSR-1 systems complete type-$i$ jobs at the same rate.
When there are not many jobs in the MSR system, the MSR system may complete jobs at a slower rate.
However, we know the queue length of the MSR system is small in this case.
We formalize this argument in Theorem \ref{theorem:msr-difference}.

\msrdifference
\begin{proof}

We prove this claim via a coupling argument in Appendix \ref{app:kextra}.
\end{proof}

Theorem \ref{theorem:msr-difference} tells us that the MSR and MSR-1 systems perform similarly (within an additive constant) at all system loads.
It is worth noting, however, that these bounds become asymptotically tight as the arrival rates in both systems increase and the mean queue length of the MSR-1 system goes to infinity.
This makes intuitive sense --- our proof argues that the only differences between the MSR and MSR-1 systems happen when there are few jobs in one of the two systems, which becomes increasingly rare as the arrival rates increase.
We now proceed to leverage our bounds by analyzing the MSR-1 system in order to bound mean queue length in the MSR system.

\subsection{Analysis of the MSR-1 System}
We now analyze an MSR-1 system running under policy $\msronePolicy$.
Recently, \cite{grosof2023reset} bounded the mean queue length of an MSR-1 system by using the concept of \emph{relative completions}.
The concept was an independent rediscovery of a technique which had previously been introduced by \cite{falin1999heavy} for the Markovian arrival settings, and extended to the Markovian service setting by \cite{dimitrov2011single}, proving equivalent results for mean queue length.
A recent tech report, \cite{grosof2024analysis}, provides a helpful expanded presentation of the technique of \cite{grosof2023reset},
and a more thorough review of the state of the literature. 
For any state $s$ in the modulating process of $\msronePolicy$, the relative completions of the state is defined as the long-run difference in completions between the MSR-1 system that starts from state $s$ and the MSR-1 system that starts from a random state chosen according to the stationary distribution of $\{m^p(t)\}$.
Formally, let $C_i^{\msronePolicy}(t)$ denote the cumulative type-$i$ completions under $\msronePolicy$ by time $t$.
We define $\Delta(s)$ to be the relative completions of state $s$ in the modulating process where
$$\Delta(s)=\lim_{t\rightarrow \infty} C_i^{\msronePolicy}(t\mid m^{\msrPolicy}(0)= s)-\E[u_i^{\msronePolicy}] t.$$
We also define $\wc_i^{\msronePolicy}$ to be the completion-rate-weighted stationary random variable for the state in $\{\wc_i^{\msronePolicy}(t)\}$, which is defined such that $$P\{\wc_i^{\msronePolicy}=s\}=P\{m^{\msrPolicy}=s\}\frac{\msrstate{i}{s}}{\E[u_i^{\msronePolicy}]}.$$
A key result from \cite{grosof2023reset,grosof2024analysis} can then be stated as follows.

\begin{remark}
    \label{remark:marc}
    In \cite{grosof2023reset,grosof2024analysis}, it is shown that
    $$\E[Q_i^{\msronePolicy}]=\frac{\rho_i+\E[\Delta(\wc_i^{\msronePolicy})]}{1-\rho_i}-\E_U[\Delta(\wc_i^{\msronePolicy})],$$
    where $\rho_i=\frac{\lambda_i}{\E[u_i^{\msronePolicy}]}$ and $\E_U[\Delta(\wc_i^{\msronePolicy})]$ denotes the expectation of $\Delta(\wc_i^{\msronePolicy})$ taken over moments when the system experiences unused service.
\end{remark}

\subsection{Queue Length Bounds for the MSR System}
\label{sec:bounds}
Combining Remark \ref{remark:marc} with Theorem \ref{theorem:msr-difference} yields the following bound for the MSR system.
\begin{restatable}{theorem}{msrbound}
    \label{thm:msr-bound}
    For any multiresource job system under an MSR policy, $\msrPolicy$, the mean queue length of type-$i$ jobs is bounded in terms of the corresponding MSR-1 system under policy $\msronePolicy$ as
    $$\E[Q^{\msrPolicy}_i]=\frac{\rho_i+\E[\Delta(\wc_i^{\msronePolicy})]}{1-\rho_i}+B,$$
    where $$-\max_s\Delta(s)\le B\le -\min_s\Delta(s)+\beta_i^{\msrPolicy}.$$
    and
    $\beta_i^{\msrPolicy}$ is the maximum number of type-$i$ jobs served in parallel under $\msrPolicy$. 
\end{restatable}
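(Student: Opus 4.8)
The plan is to derive the bound by directly combining two results already in hand: the exact queue-length formula for the MSR-1 system (Remark \ref{remark:marc}, itself from \cite{grosof2023reset,grosof2024analysis}) and the coupling sandwich $\E[Q_i^{\msronePolicy}]\le \E[Q_i^{\msrPolicy}]\le \E[Q_i^{\msronePolicy}]+\beta_i^{\msrPolicy}$ of Theorem \ref{theorem:msr-difference}. First I would record that the corresponding MSR-1 system is built on the \emph{same} modulating process $\{m^{\msrPolicy}(t)\}$ as the MSR system, with per-state completion rates $\msrstate{i}{s}=\mu_i\modstate{i}{s}$ (Definition \ref{def:msr-1}). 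Hence $\rho_i=\lambda_i/\E[u_i^{\msronePolicy}]=\lambda_i/(\mu_i\E[u_i^{\msrPolicy}])$ is exactly the load on the decoupled type-$i$ system, and the relative-completion function $\Delta(\cdot)$ and weighted stationary variable $\wc_i^{\msronePolicy}$ in the claimed bound are precisely the MSR-1 quantities. Since we are analyzing a stable system, Lemma \ref{thm:existence} gives $\mu_i\E[u_i^{\msrPolicy}]>\lambda_i$, so $\rho_i<1$ and the Remark \ref{remark:marc} expression is well-defined.

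Plugging Remark \ref{remark:marc} into Theorem \ref{theorem:msr-difference} yields
$$\frac{\rho_i+\E[\Delta(\wc_i^{\msronePolicy})]}{1-\rho_i}-\E_U[\Delta(\wc_i^{\msronePolicy})]\ \le\ \E[Q_i^{\msrPolicy}]\ \le\ \frac{\rho_i+\E[\Delta(\wc_i^{\msronePolicy})]}{1-\rho_i}-\E_U[\Delta(\wc_i^{\msronePolicy})]+\beta_i^{\msrPolicy},$$
so writing $\E[Q_i^{\msrPolicy}]=\frac{\rho_i+\E[\Delta(\wc_i^{\msronePolicy})]}{1-\rho_i}+B$ gives $-\E_U[\Delta(\wc_i^{\msronePolicy})]\le B\le -\E_U[\Delta(\wc_i^{\msronePolicy})]+\beta_i^{\msrPolicy}$. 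The only remaining move is to replace the $\E_U$-term, which depends on the (generally intractable) distribution of the modulating state at idling epochs, by quantities over the finite state space. Since $\E_U[\Delta(\wc_i^{\msronePolicy})]$ is an average of the finitely many values $\{\Delta(s)\}_s$ against some probability distribution on the states, it lies in $[\min_s\Delta(s),\ \max_s\Delta(s)]$; therefore $-\E_U[\Delta(\wc_i^{\msronePolicy})]\ge -\max_s\Delta(s)$ and $-\E_U[\Delta(\wc_i^{\msronePolicy})]+\beta_i^{\msrPolicy}\le -\min_s\Delta(s)+\beta_i^{\msrPolicy}$, which chains with the previous display to produce the stated bounds on $B$.

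I do not expect a real obstacle here: the two substantive ingredients — the exact MSR-1 identity and the coupling bound of Theorem \ref{theorem:msr-difference} — are already proven, so the argument is essentially bookkeeping together with the elementary fact that a conditional expectation of $\Delta$ is trapped between its extremes. The only two points needing a sentence of care are (i) confirming $\rho_i<1$ via the stability hypothesis and Lemma \ref{thm:existence}, so Remark \ref{remark:marc} applies; and (ii) checking the $\E_U$-term is defined on a positive-measure event — since $\rho_i<1$ the type-$i$ MSR-1 queue empties with positive probability, and since $\E[u_i^{\msronePolicy}]>\lambda_i>0$ at least one state has $\msrstate{i}{s}>0$, so unused service occurs with positive probability while the modulating chain continues to move through such states.
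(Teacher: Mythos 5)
Your proposal is correct and follows essentially the same route as the paper's proof: apply Theorem \ref{theorem:msr-difference} to sandwich $\E[Q_i^{\msrPolicy}]$ between the MSR-1 quantities, substitute the exact formula from Remark \ref{remark:marc}, and then bound the $\E_U[\Delta(\wc_i^{\msronePolicy})]$ term by $\min_s\Delta(s)$ and $\max_s\Delta(s)$. Your added remarks on why $\rho_i<1$ and why the unused-service expectation is well defined are sound but not part of the paper's (terser) argument.
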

\begin{proof}
    First, we invoke Theorem \ref{theorem:msr-difference} to relate the mean type-$i$ queue length in the MSR system to a corresponding MSR-1 system, giving
    $$\E[Q_i^{\msronePolicy}]\le \E[Q^{\msrPolicy}_i]\le \E[Q_i^{\msronePolicy}]+\beta_i^{\msrPolicy}.$$
    We then apply Remark \ref{remark:marc} to this bound to obtain
   \begin{equation*}
      \frac{\rho_i+\E[\Delta(\wc_i^{\msronePolicy})]}{1-\rho_i} -  \E_U[\Delta(\wc_i^{\msronePolicy})]  \leq \E[Q_i^{\msrPolicy}]\leq \frac{\rho_i+\E[\Delta(\wc_i^{\msronePolicy})]}{1-\rho_i} -  \E_U[\Delta(\wc_i^{\msronePolicy})] + \beta_i^{\msrPolicy}
    \end{equation*}
    Finally, we note that 
    $$\min_{s} \Delta(s) \leq \E_U[\Delta(\wc_i^{\msronePolicy})] \leq \max_{s} \Delta(s),$$
    which gives the desired bounds.
\end{proof}

To compute the bounds from Theorem \ref{thm:msr-bound} for a given system, one must compute both the distribution of $\wc_i^{\msronePolicy}$ and the extrema of $\Delta$.
Although these quantities do not have a general closed-form, computing the quantities for a given system is straightforward.

First, computing the distribution of $\wc_i^{\msronePolicy}$ amounts to computing the stationary distribution of the modulating process $m^\msrPolicy(t)$ and weighting each term by its completion rate of type-$i$ jobs.
From Theorem \ref{thm:throughput-optimal}, we know that a modulating process needs only $K$ states in order to stabilize the system, so this stationary distribution can generally be computed quickly using standard techniques.

Next, computing the minimum and maximum of the $\Delta$ function requires computing $\Delta(s)$ for each state $s$.
Let $\bm{\Delta}=(\Delta(1), \Delta(2), \cdots, \Delta(N^{\msrPolicy}))^T$.
We compute $\bm{\Delta}$ as follows.

\begin{lemma}
    \label{lemma:solve-delta}
    In an MSR-1 system under policy $\msronePolicy$ whose modulating process has the infinitesimal generator $\bf{G}^{\msronePolicy}$, $\bm{\Delta}$ can be computed as the solution to
    $$
    \begin{cases}
        {\bf G^{\msrPolicy}}\cdot \bm{\Delta}=\E[u_i^{\msronePolicy}]\bm{1}_K - (\msrstate{i}{1}, \msrstate{i}{2}, \cdots, \msrstate{i}{N^{\msronePolicy}})\\
        \E[\Delta(m^{\msrPolicy})] = 0
    \end{cases}.$$
\end{lemma}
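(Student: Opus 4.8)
The plan is to recognize $\bm{\Delta}$ as the \emph{bias vector} (relative value function) of a Markov reward process built on the modulating chain, and to obtain the stated linear system as the Poisson equation for that process together with its normalizing condition. Fix the job type $i$, let ${\bf G}^{\msronePolicy}$ be the generator of the modulating process $\{m^{\msrPolicy}(t)\}$ (the same for the MSR and MSR-1 systems by Definition~\ref{def:msr-1}), write $q_s = -{\bf G}^{\msronePolicy}_{ss} = \sum_{s'\neq s}{\bf G}^{\msronePolicy}_{ss'}$ for the exit rate of state $s$, and let $r_s = \msrstate{i}{s} = \mu_i\modstate{i}{s}$ be the (potential) type-$i$ completion rate while the modulating process occupies $s$, so that $C_i^{\msronePolicy}(t\mid m^{\msrPolicy}(0)=s)$ accumulates at instantaneous rate $r_{m^{\msrPolicy}(x)}$. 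Since the modulating CTMC is finite and irreducible, $e^{{\bf G}^{\msronePolicy}u}$ converges to $\bm{1}\,(\bm{\pi}^{\msrPolicy})^{\!\top}$ exponentially fast; hence $\E[r_{m^{\msrPolicy}(u)}\mid m^{\msrPolicy}(0)=s]\to\E[u_i^{\msronePolicy}]$ geometrically, which makes $\Delta(s)=\int_0^\infty\big(\E[r_{m^{\msrPolicy}(u)}\mid m^{\msrPolicy}(0)=s]-\E[u_i^{\msronePolicy}]\big)\,du$ absolutely convergent and equal to the limit in the definition of $\Delta$ (the interchange of limit and expectation, and finiteness of this limit, is the one technical point needing a word of justification, but it is standard for finite irreducible CTMCs and is already used in \cite{grosof2023reset,grosof2024analysis}).

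Next I would derive the first equation of the system by first-transition analysis. Conditioning on the first jump of the modulating process out of $s$ --- which happens after an $\mathrm{Exp}(q_s)$ holding time and moves to state $s'$ with probability ${\bf G}^{\msronePolicy}_{ss'}/q_s$ --- and letting the horizon go to infinity gives
\[ \Delta(s) = \frac{r_s - \E[u_i^{\msronePolicy}]}{q_s} + \frac{1}{q_s}\sum_{s'\neq s}{\bf G}^{\msronePolicy}_{ss'}\,\Delta(s'). \]
Multiplying by $q_s$ and absorbing $-q_s\Delta(s) = {\bf G}^{\msronePolicy}_{ss}\Delta(s)$ into the sum yields $\sum_{s'}{\bf G}^{\msronePolicy}_{ss'}\Delta(s') = \E[u_i^{\msronePolicy}] - \msrstate{i}{s}$ for every $s$, i.e. ${\bf G}^{\msronePolicy}\bm{\Delta} = \E[u_i^{\msronePolicy}]\,\bm{1} - (\msrstate{i}{1},\msrstate{i}{2},\dots,\msrstate{i}{N^{\msronePolicy}})^{\!\top}$. (Equivalently, one can apply ${\bf G}^{\msronePolicy}$ directly to the integral representation of $\bm{\Delta}$ and use ${\bf G}^{\msronePolicy}e^{{\bf G}^{\msronePolicy}u}r = \tfrac{d}{du}e^{{\bf G}^{\msronePolicy}u}r$ together with ${\bf G}^{\msronePolicy}\bm{1}=\bm{0}$ to get the same identity.) For the second equation, starting the modulating process from its stationary law $\bm{\pi}^{\msrPolicy}$ keeps it stationary, so $\E[C_i^{\msronePolicy}(t)]=\E[u_i^{\msronePolicy}]\,t$ exactly; averaging the definition of $\Delta$ over $\bm{\pi}^{\msrPolicy}$ therefore gives $\E[\Delta(m^{\msrPolicy})]=0$.

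Finally I would argue these two conditions pin $\bm{\Delta}$ down uniquely, so it \emph{is} the solution of the displayed system. Irreducibility of the modulating chain gives $\mathrm{rank}({\bf G}^{\msronePolicy}) = N^{\msronePolicy}-1$ with null space $\mathrm{span}\{\bm{1}\}$, so the solution set of ${\bf G}^{\msronePolicy}\bm{x} = \E[u_i^{\msronePolicy}]\bm{1} - (\msrstate{i}{1},\dots,\msrstate{i}{N^{\msronePolicy}})^{\!\top}$ is a one-parameter affine family $\bm{\Delta}_0 + c\bm{1}$ (the right-hand side lies in the range of ${\bf G}^{\msronePolicy}$ because it is orthogonal to the left null vector $\bm{\pi}^{\msrPolicy}$: its inner product with $\bm{\pi}^{\msrPolicy}$ is $\E[u_i^{\msronePolicy}] - \E[u_i^{\msronePolicy}] = 0$), and the constraint $\langle\bm{\pi}^{\msrPolicy},\bm{\Delta}\rangle = \E[\Delta(m^{\msrPolicy})] = 0$ selects $c$ uniquely. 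Hence the relative-completions vector satisfies, and is determined by, the two equations in the lemma. I expect the main obstacle to be purely the soft-analysis step flagged above --- confirming that the limit defining $\Delta(s)$ exists, is finite, and agrees with the bias of the Markov reward process --- while the algebraic derivation of the linear system and the uniqueness argument are routine.
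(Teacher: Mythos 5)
Your proposal is correct and follows essentially the same route as the paper: the heart of both arguments is the first-transition (one-step) analysis $\Delta(s)=\bigl(\msrstate{i}{s}-\E[u_i^{\msronePolicy}]\bigr)/q_s+\sum_{s'\neq s}({\bf G}_{s,s'}/q_s)\Delta(s')$, which after clearing denominators vectorizes to ${\bf G}\bm{\Delta}=\E[u_i^{\msronePolicy}]\bm{1}-(\msrstate{i}{1},\dots,\msrstate{i}{N^{\msronePolicy}})$. You go somewhat further than the paper by justifying that the limit defining $\Delta(s)$ exists, by proving the normalization $\E[\Delta(m^{\msrPolicy})]=0$ directly (the paper cites \cite{grosof2023reset} for this), and by adding the rank/uniqueness argument showing the two conditions determine $\bm{\Delta}$ --- all of which is sound and strengthens the exposition without changing the approach.
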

\begin{proof}
    We begin with relating the value of $\Delta(s)$ to the $\Delta$-value of its neighboring states: \begin{equation}
        \Delta(s)=\frac{\msrstate{i}{s}-\E[u_i^{\msronePolicy}]}{\sum_{s' \neq s}{\bf G^{\msrPolicy}}_{s,s'}}+\frac{\sum_{s'\neq s} {\bf G^{\msrPolicy}}_{s,s'} \Delta(s')}{\sum_{s' \neq s}{\bf G^{\msrPolicy}}_{s,s'}}. \label{eq:delta-transition}
    \end{equation}
    Equation \ref{eq:delta-transition} is composed of two terms: the relative completions that accrue before the first transition out of state $s$, and the relative completions that accrue after the process leaves $s$.

    Multiplying $\sum_{s' \neq s}{\bf G^{\msrPolicy}}_{s,s'}$ on both sides, we get:
    \begin{align}
        \Delta(s)\sum_{s' \neq s}{\bf G^{\msrPolicy}}_{s,s'}&=\msrstate{i}{s}-\E[u_i^{\msronePolicy}] + \sum_{s'\neq s} {\bf G^{\msrPolicy}}_{s,s'} \Delta(s') \nonumber\\
        \sum_{s'\neq s} {\bf G^{\msrPolicy}}_{s,s'} \Delta(s')-\Delta(s)\sum_{s' \neq s}{\bf G^{\msrPolicy}}_{s,s'}&=\E[u_i^{\msronePolicy}]-\msrstate{i}{s}\nonumber\\
        \sum_{s'}{\bf G^{\msrPolicy}}_{s,s'} \Delta(s')&=\E[u_i^{\msronePolicy}]-\msrstate{i}{s} \label{step:single}\\
        {\bf G^{\msrPolicy}}\cdot \bm{\Delta}&=\E[u_i^{\msronePolicy}]\bm{1}_{N^{\msrPolicy}}-\mu_i(\msrstate{i}{1}, \msrstate{i}{2}, \cdots, \msrstate{i}{N^{\msrPolicy}}) \label{step:vectorize}
    \end{align}
    Step \eqref{step:vectorize} holds because \eqref{step:single} holds for all $s\in[1, N^{\msrPolicy}]$. See \cite{grosof2023reset} for proof that $\E[\Delta(m^p)]=0$.
\end{proof}
Lemma \ref{lemma:solve-delta} tells us that computing all $\Delta$ terms for an MSR-1 policy $\msronePolicy$ amounts to inverting the policy's generator matrix ${\bf G}^{\msrPolicy}$.
Again, note that this will generally be a computationally inexpensive operation due to the limited number of candidate schedules needed by our MSR policies.

\subsection{A Queue Length Approximation for the MSR System}
\label{sec:predict}
Intuitively, the upper bound from Theorem \ref{thm:msr-bound} will be looser at lower loads, since Theorem \ref{theorem:msr-difference} is most accurate at high loads.
Specifically, to obtain a bound in Theorem \ref{thm:msr-bound}, we assumed that the MSR system rarely experiences unused service.
To better model the effects of unused service on the MSR system, we propose the following queue length approximation.

$$\E[Q^{\msrPolicy}_i]\approx P_Q^{M/M/k_i^*}\frac{\rho_i+\rho_i\E[\Delta(\wc_i^{\msronePolicy})]}{1-\rho_i}+\rho_i\cdot k_i^*.$$

Here, $k_i^*=\E[u_i^{\msrPolicy}]$ and $P_Q^{M/M/k_i^*}$ denotes the queueing probability in an $M/M/k_i^*$ system.
We derive this approximation by estimating the relationship between the MSR and MSR-1 systems and applying this estimate instead of Theorem \ref{theorem:msr-difference} in the proof of Theorem~\ref{thm:msr-bound}.
The details of this argument are provided in Appendix \ref{sec:approx}.

\section{Choosing a Stable MSR Policy with Low Response Time}
\label{sec:preemptions}
Section \ref{sec:rt} established results on how to stabilize the system using an MSR policy.
However, these results do not describe how to choose between the many possible MSR policies that might stabilize the system.
In this section, we use our response time analysis from Section \ref{sec:rt} to optimize the design of an MSR policy.
We will consider how to select a good pMSR policy (Section \ref{sec:pmsr}), how to select a good nMSR policy (Section \ref{sec:nmsr}), and how to select a good sMSR policy (Section \ref{sec:smsr}).

To illustrate the effectiveness of our optimization techniques, we will compare the MSR policies we design to the following three state-of-the-art scheduling policies:

\bgraph{MaxWeight} is a preemptive policy proposed in \cite{maguluri2013scheduling} that chooses a schedule by solving a bin-packing problem on every arrival or departure. MaxWeight tries to maximize the weighted number of jobs in service, where the weight of type-$i$ jobs equals to the queue length of type-$i$ jobs.

\bgraph{Randomized-Timers} is a non-preemptive policy proposed in \cite{Ghaderi2016RandomizedCloud} that uses moments of when jobs complete to change schedules.  The policy uses a complex randomized algorithm to discover good solutions to the weighted bin-packing problem used by MaxWeight while the policy runs.

\bgraph{First-Fit} is a simple, non-preemptive policy similar to FCFS.  Sometimes known as BackFilling \cite{Grosof2022WCFS:Systems}, First-Fit examines the queued jobs in FCFS order upon each job arrival or completion.  If a job is found that can be added to the set of running jobs, First-Fit serves this job.  Running jobs are never preempted.  First-Fit is generally thought to be a better version of the FCFS policy from Section \ref{sec:intro}.

We will compare our MSR policies using an example of a server with multiresource jobs.
Specifically, we consider a server with three resources --- 20 cores, 15 GB of DRAM and 50 Gbps of network bandwidth.
We consider a workload with three job types, whose demand vectors are given as 
$${\bf D}_1 = (3,7,1) \quad {\bf D}_2 = (4,1,1) \quad {\bf D}_3=(10,1,5).$$
For this example, we vary $\rho$ by linearly scaling the arrival rate vector, $\bm{\lambda}.$
See Appendix \ref{sec:example} for further details.
While this example is illustrative, we evaluate a more realistic workload in Section \ref{sec:eval-borg}.

\subsection{Choosing a pMSR Policy (Preemptive MSR)}
\label{sec:pmsr}
Corollary \ref{cor:pmsr-construction} provides an Mixed Integer Quadratic Constrained Program (MIQCP) whose solutions correspond to the set of stable pMSR policies.
Our goal is to augment this MIQCP with an objective function to help optimize our choice of pMSR policy.
One approach would be to use either the bounds of Theorem \ref{thm:msr-bound} or our queue length approximation functions directly as the objective functions.
Unfortunately, these functions lack a general closed form, and would make the optimization problem intractable.
We instead propose a simpler objective function to heuristically minimize our queue length upper bounds.

We select a pMSR policy using the following two-step approach.
First, we solve the following MIQCP for the candidate schedules ${\bf W}^p$ and the limiting distribution ${\bm \pi}^p$:
\begin{equation}
    \begin{cases}
    \min & \max_i \rho_i\quad \textnormal{(where $\rho_i=\lambda_i/({\bm\pi}^p{\bf W}^p)_i$)}\\
    \textrm{s.t.} & {\bf W}^p{\bf D} \le {\bf P} \quad \textnormal{(Capacity constraint)}\\
    &\pi^p_i \ge 0, \forall i \quad\textnormal{(Positive fractions)}\\
    &\pi^p_1 + \pi^p_2 + \cdots + \pi^p_{N^p} \le 1 \quad\textnormal{(Fractions sum to 1)}\\
    &\bm{\pi}^p{\bf W}^p > \bm{\lambda} \oslash \bm{\mu} \quad \textnormal{(Stability condition from Lemma \ref{thm:existence})}\\
    &{\bf W}^p \textnormal{ integers}
\end{cases}.\label{eq:optimization}
\end{equation}
Appendix \ref{sec:init-example} gives a numerical example of equation \eqref{eq:optimization}.

Second, we compute a modulating process with a simple loop structure that matches the parameters computed in $\eqref{eq:optimization}$.
Here, we sort the set of candidate schedules lexicographically and let each state transition only to the next neighboring state in this ordering until all states have been visited.
Given this loop structure and the desired values of $\bm{\pi}^p$, we can compute the necessary transition rates by solving a system of $N^p$ equations with $N^p+1$ unknowns.
Solving this system defines the ratios of the transition rates relative to one another, but still allows all transition rates to be scaled by a factor $\alpha>0$.
We refer to $\alpha$ as the \emph{switching rate} of the system because it controls the overall rate of preemptions in the pMSR system.
The loop-structured modulating process with switching rate $\alpha$ for our example system is shown in the Appendix in Figure \ref{fig:mp:pmsr}.

Our bounds suggest that the mean response time of a pMSR policy is minimized by setting $\alpha$ to be arbitrarily high.
Intuitively, the system in this case begins to behave like a non-modulated system using a constant schedule of $\E[\schedule^{\msrPolicy}]$.
We can actually prove this property using our mean queue length bounds for small pMSR instances with two job types (see Appendix \ref{sec:switching-infinitely-often}).
Figure \ref{fig:switch:pmsr} confirms this result by showing how the mean queue length under a pMSR policy improves as $\alpha$ grows.
Notably, although our theoretical results show that taking $\alpha \rightarrow \infty$ is beneficial in this example, Figure \ref{fig:switch:pmsr} shows that we capture most of the performance benefit as long as $\alpha\geq2$.
Hence, we will assume that $\alpha$ is some constant chosen to be sufficiently large.

The above process for selecting a pMSR policy has both intuitive and theoretical justifications.
Intuitively, the $\rho_i$ terms in our bounds denote the load on the system experienced by type-$i$ jobs.
Our bounds show that $\E[Q_i]$ scales as $\Theta(\frac{1}{1-\rho_i})$.
Hence, it is intuitive to use an objective function in \eqref{eq:optimization} that balances load across job types.
Formally, we can prove that the policy we select compares favorably to MaxWeight in heavy traffic.
Specifically, we show that the pMSR policy we select is constant-competitive with MaxWeight in the limit as $\rho \to 1$.
Our proof relies on the fact that both the pMSR and MaxWeight queue lengths scale as $\Theta(\frac{1}{1-\rho})$ in heavy traffic.
Hence, while the second step of our selection process uses a simple loop structure for the modulating process, further optimization of the modulating process only gives constant factor improvement in mean queue length relative to MaxWeight in heavy traffic.

We formally define our notion of heavy-traffic scaling in Definition \ref{def:ht} in Appendix \ref{sec:ht}.  Using this definition, we prove Theorem \ref{thm:ht} as follows.
\begin{restatable}{theorem}{competitiveratio}
\label{thm:ht}
Under the heavy-traffic scaling regime in Definition \ref{def:ht}, the solution to \eqref{eq:optimization} is the same at all loads.
The policy, $p^*$, that we select using \eqref{eq:optimization} is constant-competitive with MaxWeight in heavy traffic.
That is, there exists a constant $C>0$ such that
$$\lim_{\rho \to 1} \frac{\E[Q^p]}{\E[Q^{\text{MaxWeight}}]} \leq C$$
\end{restatable}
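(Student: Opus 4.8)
\noindent\emph{Proof proposal.}
The plan is to first show that the program \eqref{eq:optimization} has a $\rho$-independent solution set under the heavy-traffic scaling of Definition~\ref{def:ht}, and then to feed this into the bound of Theorem~\ref{thm:msr-bound}, comparing the resulting $\Theta(1/(1-\rho))$ growth of $\E[Q^{p^*}]$ with the known heavy-traffic growth of $\E[Q^{\mathrm{MaxWeight}}]$.

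For the first part, Definition~\ref{def:ht} fixes a direction, so $\bm{\lambda}=\rho\,\bm{\lambda}^{(1)}$ with $\bm{\lambda}^{(1)}\oslash\bm{\mu}$ a fixed point on the boundary of $Conv(\mathcal{S})$ and $\rho\in(0,1)$ the only varying parameter. Since each $\rho_i$ in \eqref{eq:optimization} is linear in $\bm{\lambda}$, the objective factors as $\max_i\rho_i=\rho\cdot g({\bf W}^p,{\bm\pi}^p)$, where $g$ is a fixed, $\rho$-free function of the candidate schedules and limiting distribution. The capacity, nonnegativity, simplex, and integrality constraints do not involve $\rho$; the only $\rho$-dependent one is the stability constraint ${\bm\pi}^p{\bf W}^p>\bm{\lambda}\oslash\bm{\mu}$. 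I would argue $\min g=1$: the bound $\min g\le 1$ follows by taking a Carath\'eodory representation of the boundary point $\bm{\lambda}^{(1)}\oslash\bm{\mu}$ exactly as in the proof of Theorem~\ref{thm:throughput-optimal}, while $\min g\ge 1$ follows because $g<1$ at a feasible point would place $\bm{\lambda}^{(1)}\oslash\bm{\mu}$ strictly below a point of $Conv(\mathcal{S})$, contradicting --- by downward-closedness of $\mathcal{S}$ --- that it lies on the boundary. Hence at any minimizer, writing $\rho_i^{(1)}:=\rho_i/\rho$, we have $\max_i\rho_i^{(1)}=\min g=1$, so $\rho_i=\rho\,\rho_i^{(1)}<\rho_i^{(1)}\le 1$ for every $i$ whenever $\rho<1$; but $\rho_i<1$ for all $i$ is precisely the stability constraint, so the stability constraint is automatically and strictly satisfied at every minimizer. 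Therefore the minimizer set of \eqref{eq:optimization} consists of exactly those $({\bf W}^p,{\bm\pi}^p)$ that satisfy the $\rho$-free constraints and have $g=1$, which is independent of $\rho$, and at the optimum $\max_i\rho_i=\rho$.

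For the second part, fix the pMSR policy $p^*$ obtained from \eqref{eq:optimization} together with the loop-structured modulating process and a fixed switching rate $\alpha$, as in Section~\ref{sec:pmsr}. By the first part, $p^*$ and its modulating process are the same at every load, so every quantity in Theorem~\ref{thm:msr-bound} that depends only on the modulating process and $\bm{\mu}$ --- the limiting distribution, the relative-completion values $\Delta(\cdot)$ and their weighted means, and $\beta_i^{p^*}$ --- is constant in $\rho$; only $\rho_i=\rho\,\rho_i^{(1)}$ varies. Substituting this into Theorem~\ref{thm:msr-bound}: for the bottleneck types ($\rho_i^{(1)}=1$) we get $1/(1-\rho_i)=1/(1-\rho)$ and $\E[Q_i^{p^*}]=\Theta(1/(1-\rho))$, while for the non-bottleneck types $\rho_i<\rho_i^{(1)}<1$ stays bounded away from $1$ and $\E[Q_i^{p^*}]$ stays bounded. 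Summing over types, $\E[Q^{p^*}]=\tfrac{1}{\Lambda}\langle\bm{\lambda},\E[{\bf Q}^{p^*}]\rangle\le \tfrac{C_1}{1-\rho}\bigl(1+o(1)\bigr)$ for an explicit constant $C_1>0$ determined by $p^*$. On the other side, MaxWeight is throughput-optimal and undergoes state-space collapse under the scaling of Definition~\ref{def:ht}, giving $\E[Q^{\mathrm{MaxWeight}}]\ge \tfrac{C_2}{1-\rho}\bigl(1-o(1)\bigr)$ for some $C_2>0$ \cite{Maguluri2014HeavyClusters,Stolyar2004MaxWeightTraffic}; equivalently, this $\Omega(1/(1-\rho))$ lower bound can be obtained policy-independently by tracking the $\bm\theta$-weighted remaining work for a nonnegative $\bm\theta$ with $\langle\bm\theta,\bm{\lambda}^{(1)}\oslash\bm{\mu}\rangle=\max_{\schedule\in\mathcal{S}}\langle\bm\theta,\schedule\rangle$, which behaves as a G/G/1 workload at load $\rho$, and relating its heavy-traffic mean back to $\E[{\bf Q}]$ (each exponential type-$i$ job carrying mean remaining work $1/\mu_i$). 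Dividing the two bounds yields $\lim_{\rho\to1}\E[Q^{p^*}]/\E[Q^{\mathrm{MaxWeight}}]\le C_1/C_2=:C$.

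The main obstacle is the last step: securing a rigorous $\Omega(1/(1-\rho))$ lower bound on mean queue length under MaxWeight. Invoking the state-space-collapse results of \cite{Maguluri2014HeavyClusters,Stolyar2004MaxWeightTraffic} requires checking that their hypotheses match our exponential, single-server, multiresource setting, and the self-contained $\bm\theta$-workload argument requires care in passing between the scalar workload process and the vector queue length and in verifying the variance/regularity conditions needed for a Kingman-type lower bound. Everything else is essentially bookkeeping: the first part is immediate once the objective is seen to factor as $\rho\cdot g$ and the stability constraint is seen to be slack at optimizers, and the heavy-traffic rate of $\E[Q^{p^*}]$ drops out of the $\rho$-independent expression in Theorem~\ref{thm:msr-bound} once $\rho_i=\rho\,\rho_i^{(1)}$ is substituted.
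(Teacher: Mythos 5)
Your proposal is correct and follows essentially the same route as the paper: load-invariance of the optimizer of \eqref{eq:optimization} under Definition \ref{def:ht}, a $\Theta(1/(1-\rho))$ upper bound on $\E[Q^{p^*}]$ from Theorem \ref{thm:msr-bound} with $\rho$-independent relative-completion terms, and a matching $\Omega(1/(1-\rho))$ growth rate for MaxWeight. The one obstacle you flag is resolved in the paper simply by citing the $\Theta(1/\epsilon)$ heavy-traffic scaling of MaxWeight from \cite{eryilmaz2012asymptotically} and noting that, for a fixed closest face of $Conv(\mathcal{S})$, $\epsilon=(1-\rho)\cos(\theta)$, so $\epsilon$ and $1-\rho$ agree up to a constant; your handling of the first part (distinguishing bottleneck from non-bottleneck types rather than asserting $\rho_i=\rho$ for all $i$) is, if anything, more careful than the paper's.
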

\begin{proof}
See Appendix \ref{sec:ht} for a full proof.
\end{proof}

\begin{figure}
    \begin{subfigure}[b]{.48\linewidth}
        \centering
        \includegraphics[width=\linewidth]{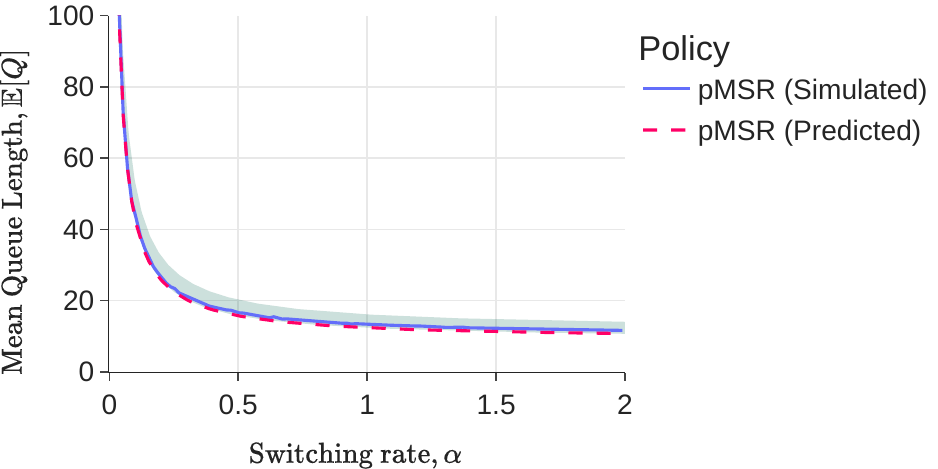}
        \captionsetup{width=\linewidth}
        \caption{pMSR}
        \label{fig:switch:pmsr}
    \end{subfigure}
    \begin{subfigure}[b]{.48\linewidth}
        \centering
        \includegraphics[width=\linewidth]{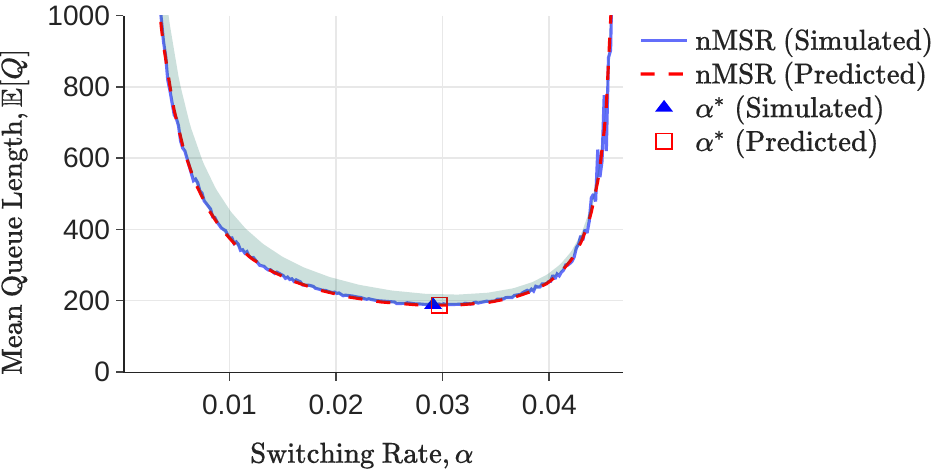}
        \captionsetup{width=\linewidth}
        \caption{nMSR}
        \label{fig:switch:nmsr}
    \end{subfigure}
    \caption{The effect of switching rate on MSR policies under various preemption behaviors in the example from Section \ref{sec:preemptions}. The shaded regions depict our queue length bounds.  Our pMSR policies benefit from a high switching rate when system load $\rho=0.9$.  However, nMSR policies suffer when $\alpha$ is too high or too low.  Our queue length prediction is accurate and can be used to select $\alpha^*$ in both cases.}
\end{figure}

\subsection{Choosing an nMSR Policy (Nonpreemptive MSR)}
\label{sec:nmsr}
Given the choice of pMSR policy in Section \ref{sec:pmsr}, the proof of Theorem \ref{thm:nmsr-existence} yields a related stable nMSR policy.
Specifically, our proof suggests setting the working states to be the states of $m^{\msrPolicy}(t)$ and then using arbitrary switching routes between these states.
To stabilize the system, we can scale the transition rates out of the working states by a very small switching rate $\alpha > 0$.
This method for choosing an nMSR policy has two main issues.
First, switching routes are chosen arbitrarily instead of being chosen to optimize mean response time.
Second, while $\alpha$ must be small to guarantee stability, the optimal switching rate $\alpha^*$ is not obvious here.

The space of possible switching routes between working states is massive.
Because our queue length bounds must be recomputed for each choice of switching routes, it is hard to optimize this choice using our results.
However, we sampled many possible switching routes for the example system in this section and found that the choice of switching routes had a minimal impact on mean queue length.
Hence, we defer a full analysis of optimizing switching routes to future work.
An example of an nMSR policy with simple switching routes is shown in Figure \ref{fig:nmsr-states} of Appendix \ref{sec:example}.

Our queue length bounds \emph{are} useful for optimizing the choice of $\alpha$.
Given the pMSR policy, $\msrPolicy$, selected in Section \ref{sec:pmsr}, let $\nmsrPolicy(\alpha)$ be the nMSR policy based on $\msrPolicy$ whose transitions out of working states are scaled by $\alpha > 0$.
Because it is non-preemptive, $\nmsrPolicy(\alpha)$ can suffer when $\alpha$ is high and the policy spends too much time in switching states.
Conversely, if $\alpha$ is too small, the relative completions terms in our queue length analysis can become large.
We therefore use our bounds to select the optimal switching rate, $\alpha^*$, that balances the cost of spending time in switching states with the cost of switching too slowly between working states.

Figure \ref{fig:switch:nmsr} shows that our mean queue length predictions allow us to accurately predict $\alpha^*$, and that choosing the correct value of $\alpha$ dramatically reduces mean queue length.
Specifically, even though we chose $\alpha$ to minimize our  queue length prediction, the value that minimizes this prediction ($\alpha^*$ {\bf Predicted}) is within $2.1\%$ of the $\alpha^*$ derived in simulation ($\alpha^*$ {\bf Simulated}).
The result of using $\alpha^*$ {\bf Predicted} is a mean queue length that is within $1.5\%$ of optimal.

Furthermore, we can bound the potential error from minimizing our prediction function instead of an exact formula for $\E[Q]$.
The $\E[Q]$ from using $\alpha^*$ {\bf Predicted} is at most the minimum value of our upper bound over all $\alpha$'s.
Similarly, the $\E[Q]$ under $\alpha^*$ {\bf Simulated} is at least the minimum value of our \emph{lower} bound over all $\alpha$'s.
In our example, the difference between the minima of our upper and lower bounds is $16.3\%$.
Hence, while our prediction function is more accurate in practice, we can \emph{guarantee} that using $\alpha^*$ {\bf Predicted} is within $16.3\%$ of the nMSR policy with the optimal $\alpha$.
By contrast, Figure \ref{fig:switch:nmsr} shows that choosing the wrong $\alpha$ can easily increase $\E[Q]$ by an order of magnitude or even result in instability.
We can use our $\E[Q]$ bounds similarly to provide error bounds whenever optimizing parameters of an MSR policy using our prediction function.

\subsection{Choosing an sMSR Policy (MSR with Setup)}
\label{sec:smsr}
To select an sMSR policy, $\smsrPolicy$, we closely follow our selection process for the nMSR policy, $\nmsrPolicy$.
The central difference between these policies is the structure of their switching routes, which differ in two key ways.
First, the rates that the policies move through switching states will differ because $\nmsrPolicy$ switches by waiting for completions with rate $\mu_i$ and $\smsrPolicy$ switches by waiting for setup times with setup rate $\gamma$.
Second, the nMSR policy completes jobs during its switching states, but the time $\smsrPolicy$ spends in its switching states is wasted because it is unable to complete the jobs being preempted.
Nonetheless, again consider a range of sMSR policies, $\smsrPolicy(\alpha)$, whose working state transitions are scaled by $\alpha$, using our analysis to select an optimal switching rate $\alpha^*$.

\begin{wrapfigure}{l}{0.48\textwidth}
    \centering
    \includegraphics[width=\linewidth]{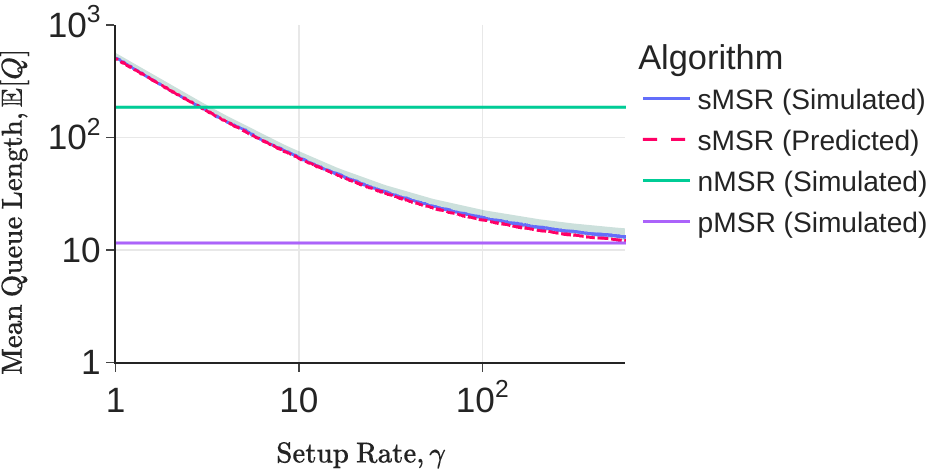}
    \caption{Mean queue length of the sMSR policy $\smsrPolicy(\alpha^*)$ as a function of setup rate, $\gamma$ when $\rho=0.9$. 
    The shaded regions depict our queue length bounds for $\smsrPolicy(\alpha^*)$. 
    We compare the performance of $\smsrPolicy(\alpha^*)$ to the nMSR policy, $\nmsrPolicy(\alpha^*)$, and the pMSR policy, $\msrPolicy$.
    }
    \label{fig:smsr}
\end{wrapfigure}
Because one can always use an nMSR policy and avoid costly preemptions altogether, it is natural to compare our sMSR policy, $\smsrPolicy(\alpha^*)$, to the nMSR policy $\nmsrPolicy(\alpha^*)$.
Specifically, one must understand how fast the setup times must be in order for $\smsrPolicy(\alpha^*)$ to outperform $\nmsrPolicy(\alpha^*)$.

Figure \ref{fig:smsr} shows that our mean queue length bounds accurately predict when it is better to use an sMSR policy than an nMSR policy.
The $\smsrPolicy(\alpha^*)$ policy performs better than $\nmsrPolicy(\alpha^*)$ when $\gamma$ is significantly higher than $\mu_i$ for all $i$, as expected.
As $\gamma$ grows and preemption overhead decreases,  we see the performance of $\smsrPolicy(\alpha^*)$ approach the performance of $\msrPolicy$.
In these experiments, we used our mean queue length bounds to select a different $\alpha^*$ for each policy and for each value of $\gamma$.

\subsection{The Example System}
\label{sec:simple}
Figure \ref{fig:simple} summarizes our results from this section by measuring the performance of our optimized MSR policies and the competitor policies as a function of system load.

Figure \ref{fig:simple:pmsr} shows our pMSR policy, $\msrPolicy$, is close to the performance of MaxWeight.
While MaxWeight beats $\msrPolicy$ at moderate loads, they appear to converge as $\rho\rightarrow 1$.
First-Fit performs well at $\rho<.9$ but becomes unstable around $\rho=.9$.
The non-preemptive Randomized-Timers lags far behind $\msrPolicy$.

\begin{figure}[b]
    \centering
    \begin{subfigure}[b]{.27\linewidth}
        \centering
        \includegraphics[width=\linewidth]{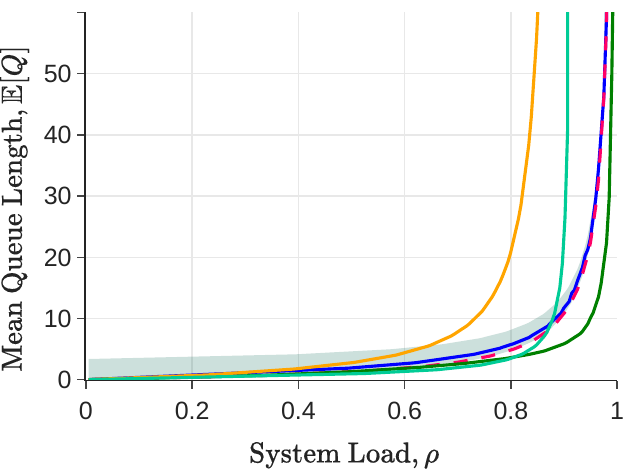}
        \captionsetup{width=.8\linewidth}
        \caption{pMSR}
        \label{fig:simple:pmsr}
    \end{subfigure}
    \begin{subfigure}[b]{.27\linewidth}
        \centering
        \includegraphics[width=\linewidth]{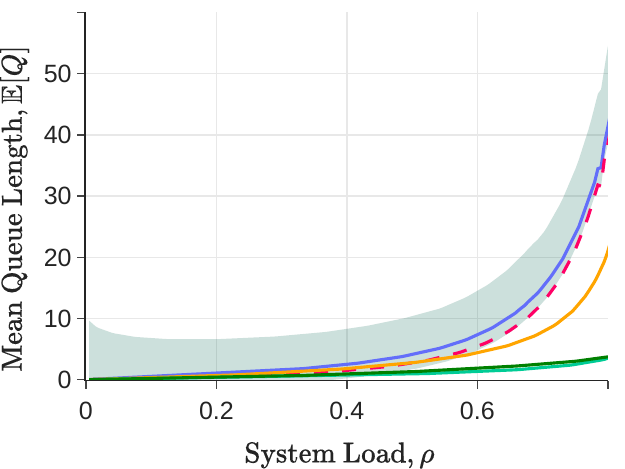}
        \captionsetup{width=1\linewidth}
        \caption{nMSR at low loads}
        \label{fig:simple:nmsr:low} 
    \end{subfigure}
    \begin{subfigure}[b]{.41\linewidth}
        \centering
        \includegraphics[width=\linewidth]{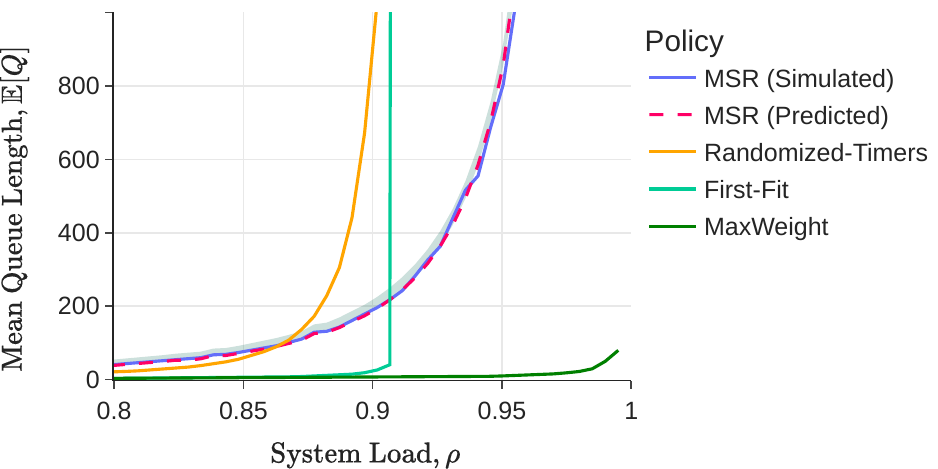}
        \captionsetup{width=.8\linewidth}
        \caption{nMSR at high loads}
        \label{fig:simple:nmsr:high} 
    \end{subfigure}
    \caption{Mean queue length under pMSR and nMSR policies in the example from Section \ref{sec:preemptions}.
    The pMSR policy uses $\alpha=2$. We use $\alpha^*$ for the nMSR policy at each load.
    The shaded regions depict the queue length bounds for our MSR policies. The dashed lines represent our queue length approximations.}
    \vspace{-.15in}
    \label{fig:simple}
\end{figure}

We also evaluate the nMSR policy, $\nmsrPolicy(\alpha^*)$, in Figures \ref{fig:simple:nmsr:low} and \ref{fig:simple:nmsr:high}.
MaxWeight greatly outperforms $\nmsrPolicy(\alpha^*)$ at all loads because it is preemptive.
Furthermore, $\nmsrPolicy(\alpha^*)$ is not the best of the non-preemptive policies at low loads.
However, $\nmsrPolicy(\alpha^*)$ becomes the best non-preemptive policy under high load.
Here, First-Fit becomes unstable, while Randomized-Timers suffers from changing between schedules too slowly.
In fact, Randomized-Timers performed so poorly under high load that many of our simulations did not converge (these points were omitted).


\section{Evaluation}
\label{sec:eval-borg}
Section \ref{sec:preemptions} shows how to optimize the performance of MSR policies under our theoretical model.
This raises the question of how our policies perform under a real-world workload that violates several of our modeling assumptions (see Section \ref{sec:assumptions}).
To address this question, we performed simulations based on publicly available data from the Google Borg system \cite{tirmazi2020borg}\footnote{Code is available online at \href{https://github.com/jcpwfloi/msr-borg}{https://github.com/jcpwfloi/msr-borg}}.

\subsection{Experimental Setup}
Using the data from \cite{tirmazi2020borg}, we constructed a trace of 69 million requests that were submitted to Borg over a 30-day period in 2019.
The trace contains ten job types, and jobs demand both CPU cores and memory.
Job sizes and interarrival times follow highly variable distributions that are far from our theoretical model.
For details on the construction and features of this trace, see Appendix \ref{sec:trace}.

To simulate different arrival rates, we downsampled the trace by various downsampling factors.
Following the notation of Section \ref{sec:model}, we refer to our downsampling factor as \emph{system load}, $\rho$, and normalize this factor so a value of $\rho < 1$ implies that at least one policy we tested was stable.

Our goal is to compare the mean response time, $\E[T]$, under different scheduling policies as a function of $\rho$.
The competitor policies we consider follow the policy definitions of Section \ref{sec:preemptions}.
Specifically, we once again consider the MaxWeight, Randomized-Timers, and First-Fit policies.
For the MSR policy variants, we derive the set of working states by solving the optimization problem in \eqref{eq:optimization}.
We again determine transition rates by assuming a loop-structured modulating process with an overall switching rate of $\alpha$.
For the pMSR policies we consider, we set $\alpha=0.1$.
For the nMSR policies, we set $\alpha=3.6\times10^{-6}$, a value chosen to ensure stability when $\rho=.9$.

\subsection{A Practical Optimization: MSR With BackFilling}
\label{sec:backfill}
All of our MSR policy variants suffer during moments of unused service.
Here, resources earmarked by the modulating process to serve type-$i$ jobs are left idle when there are not enough type-$i$ jobs in the system.
This wasted service has two impacts on our response time bounds.
First, there are the additive terms in our bounds that correspond directly to the unused service for each job type.
Reducing the amount of unused service will reduce these terms.
Second, unused service also imposes an opportunity cost on the system --- the wasted capacity for class $i$ jobs could have been used to serve some other type of jobs, $j$, reducing the $\Theta\left(\frac{1}{1-\rho_j}\right)$ term in our bound.
Hence, particularly as system load increases, reducing unused service even slightly can have an outsize benefit on mean response time by reducing the load experienced by other job types.

We therefore simulate variants of our MSR policies that include a practical optimization for filling unused service capacity in the system.
If an MSR policy tries to serve a class $i$ job but does not find one available, we will try to fill the excess capacity by serving additional jobs from the central queue according to the First-Fit policy.
We refer to these MSR variants as \emph{MSR with BackFilling} policies.
When a pMSR with BackFilling policy changes working states, it can simply preempt any additional jobs that were put into service by BackFilling, thus guaranteeing that all jobs will finish no later under pMSR with BackFilling as compared to the original pMSR policy.
Therefore, the throughput-optimality results and response time upper bounds for pMSR policies extend to the analogous pMSR with BackFilling policies.
When an nMSR or sMSR with BackFilling policy changes working states, however, additional jobs may need to be preempted or completed in order to change states, making it hard to guarantee a performance benefit from BackFilling.
Nonetheless, our simulations show that even with non-preemptible jobs the benefits of BackFilling outweigh the extra time the system spends in switching states.
A more rigorous analysis of MSR policies with BackFilling will be addressed in future work.
\begin{figure}
    \centering
    \includegraphics[width=0.6\linewidth]{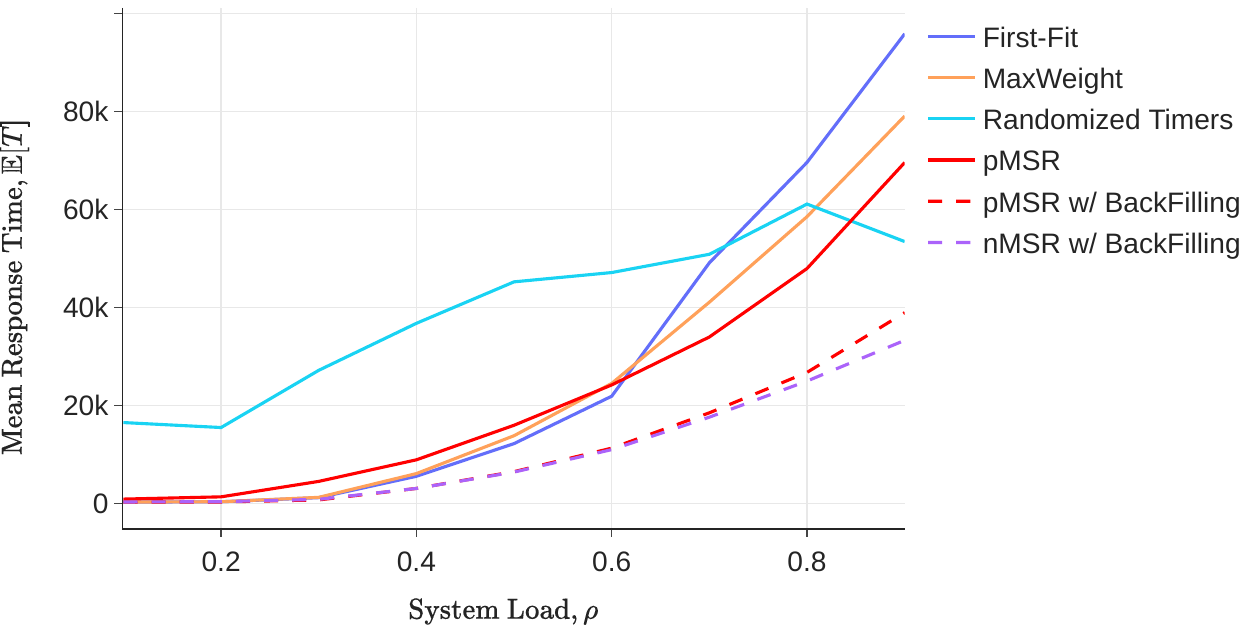}
    \caption{Evaluation of MSR policies using a Google Borg trace. The nMSR policy without BackFilling is not pictured, but performed roughly 10x worse than the policies shown due to excessive unused service.  The pMSR and nMSR policies with BackFilling provide low mean response times at all loads.}
    \label{fig:borg}
    \vspace{-.1in}
\end{figure}
\subsection{Experimental Results}
Our simulation results comparing scheduling policies from the literature to a variety of MSR policies using the Google Borg trace are shown in Figure \ref{fig:borg}.
Notably, the nMSR policy we tested is not pictured because it performs roughly 10x worse than the policies in Figure \ref{fig:borg} due to its high degree of unused service.
While First-Fit and MaxWeight perform well at low loads, they are beaten by the pMSR policy under moderate loads.  The pMSR policy outperforms First-Fit and MaxWeight by 40\% and 15\% respectively when $\rho=.8$.
Notably, the MSR with BackFilling variants of the pMSR and nMSR policies provide excellent mean response times, outperforming both First-Fit and MaxWeight by a factor of 2 when load is high.
While Randomized-Timers lags behind the MSR with BackFilling policies, it begins to be competitive at higher loads.
The non-monotonicity of the Randomized-Timers data is caused by the difficulty of tuning various policy parameters.

It is worth noting that the optimization code used to compute the candidate set for the MSR policies completed in just 15.28 seconds on a modest 28-core server using the Gurobi solver.
Unlike MaxWeight, this optimization code is not on the critical path of the MSR scheduling policies, and thus could be continuously re-run in the background using modest computational resources.

\section{Conclusion}
 This paper devises a class of low-complexity scheduling policies for multiresource jobs called MSR policies.
An MSR policy selects a set of candidate schedules to switch between using a CTMC.
The class of MSR policies is throughput-optimal despite its simplicity.
We show how to select an MSR policy that achieves low mean response time in a variety of practical cases.

While our analysis of MSR policies yields accurate predictions of mean queue length, MSR policies have some limitations.
Specifically, MSR policies are outperformed by MaxWeight and First-Fit at moderate loads.
These competitor policies have complex, infinite state modulating processes that depend on which jobs are in the queue.
Our current analysis does not apply to these complex cases, but a natural direction for future work is to design policies that incorporate some state-dependent behavior while remaining tractable.
Specifically, we are encouraged by our simulations of MSR with BackFilling, and will aim to generalize our analysis to handle these policies.

\begin{acks}
  We thank our shepherd, Dr. Amr Rizk, and the anonymous reviewers for their insightful feedback that helped improve our work.
  This work is supported by a Northwestern IEMS Startup Grant, AFOSR Grant FA9550-24-1-0002, a UNC Chapel Hill Startup Grant, and National Science Foundation grants NSF-CCF-2403195 and NSF-IIS-2322974.
\end{acks}

\clearpage
\bibliographystyle{ACM-Reference-Format}
\bibliography{references,manual-reference,bibshort}

\appendix
\section*{Appendix}

\section{Proof of Lemma \ref{thm:existence}}
\label{sec:stability}
\existence*
\begin{proof}
    These criteria are clearly necessary for stability, since it is easy to see that when the condition is violated, $\E[Q^p_i]$ is infinite for at least one job type, $i$.

    We must then show that our condition is sufficient for stability.
    We first define a test function $V({\bf Q}(t))$ to capture the system state as a number, $$V({\bf Q}(t))=\norm{\bf Q}_2^2.$$
    For convenience of notation, we set $V(t)=V({\bf Q}(t))$.
    Intuitively, this test function is a measure of queue length. The larger the test function is, the longer the overall queue is.
    Then, we invoke Foster-Lyapunov Theorem \cite{Stolyar2004MaxWeightTraffic} and show that $\{{\bf Q}(t), t\ge 0\}$ is positive-recurrent.
    Let $\Delta V({\bf q})$ denote the \emph{drift} at state ${\bf q}$, where drift denotes the instantaneous change rate of the test function when the current state is ${\bf q}$. Formally, $$\Delta V({\bf q})=\lim_{t\rightarrow 0}\frac{1}{t}\E[V(t)-V(0)\mid {\bf Q}(0)={\bf q}].$$
    Foster-Lyapunov Theorem states that when the drift function is negative for all but a finite set of exception states, the Markov chain $\{{\bf Q}(t), t\ge 0\}$ is positive-recurrent.
    \begin{remark}[Foster-Lyapunov Theorem] \label{remark:foster-lyapunov} Let $\mathcal{M}$ be a finite subset of $\mathbb{Z}_+^K$. If there exists constants $\epsilon, b>0$, such that 
    
    $$\forall {\bf q}\in\mathbb{Z}_+^K, \lim_{t\rightarrow0} \Delta V({\bf q})\le -\epsilon + b\mathbf{1}_{{\bf q}\in\mathcal{M}},$$
    then $\{{\bf Q}(t), t\ge 0\}$ is positive recurrent.
    \end{remark}
    We now bound the drift of our test function $V$. For some constant $c_1>0$,
    \begin{align}
        \Delta V({\bf q})&=\lim_{t\rightarrow 0}\frac{1}{t}\E[V(t)-V(0)\mid {\bf Q}(0)={\bf q}]\nonumber\\
        &=\lim_{t\rightarrow 0}\frac{1}{t}\E[\norm{{\bf q}+{\bf A}(t)-\hat{\Completions}(t)}_2^2-\norm{\bf q}_2^2\mid {\bf Q}(0)={\bf q}] \nonumber\\
        &\le \lim_{t\rightarrow 0}\frac{1}{t}\E[\norm{{\bf q}+{\bf A}(t)-\Completions(t)}_2^2-\norm{\bf q}_2^2\mid {\bf Q}(0)={\bf q}] \label{eq:unused}\\
        &=\lim_{t\rightarrow 0}\frac{1}{t}\E[\norm{{\bf A}(t)-\Completions(t)}_2^2+2\langle {\bf q}, \left({\bf A}(t)-\Completions(t)\right)\rangle\mid {\bf Q}(0)={\bf q}] \nonumber\\
        &=\lim_{t\rightarrow 0}\frac{1}{t}\left(\E[\norm{{\bf A}(t)-\Completions(t)}_2^2]+2\langle {\bf q}, \E[{\bf A}(t)-\Completions(t)\mid {\bf Q}(0)={\bf q}]\rangle\right)\nonumber\\
        &\le\lim_{t\rightarrow 0}c_1^2K+\frac{2}{t}\langle {\bf q}, \E[{\bf A}(t)-\Completions(t)\mid {\bf Q}(0)={\bf q}]\rangle.\label{eq:bounded-arrivals}
    \end{align}
    (\ref{eq:unused}) follows from the fact that when $\hat{\Completions}_i(t)<\Completions_i(t)$, i.e., when unused service occurs, ${\bf q}_i+{\bf A}_i(t)-\hat{\Completions}_i(t)=0, \forall i$, therefore setting $\hat{\Completions}(t)$ to $\Completions(t)$ won't decrease the norm.
    (\ref{eq:bounded-arrivals}) follows from the fact that $\E\left[\frac{{\bf A}(t)}{t}\right]=\bm{\lambda}$ and $\E\left[\frac{\Completions(t)}{t}\right]\le \E\left[\frac{{\bf A}(t)}{t}\right]$.
    Next, we simplify \eqref{eq:bounded-arrivals}:
    \begin{align}
        \lim_{t\rightarrow0}\frac{2}{t}\langle {\bf q}, \E[{\bf A}(t)-\Completions(t)\mid {\bf Q}(0)={\bf q}]\rangle&=\lim_{t\rightarrow0}\frac{2}{t}\langle {\bf q}, \E[{\bf A}(t)\mid {\bf Q}(0)={\bf q}]-\E[\Completions(t)\mid {\bf Q}(0)={\bf q}]\rangle\nonumber\\
        &=\lim_{t\rightarrow0}\frac{2}{t}\langle {\bf q}, \E[{\bf A}(t)]-\E[\Completions(t)]\rangle\label{eq:independent}\\
        &=\lim_{t\rightarrow0}\frac{2}{t}\langle {\bf q}, \bm{\lambda} t- \bm{\mu} t \otimes \E[\schedule^p]\rangle\label{eq:avg}\\
        &=2\langle {\bf q}, \bm{\lambda}-\bm{\mu}\E[\schedule^p]\rangle\nonumber
    \end{align}
    (\ref{eq:independent}) follows from the assumption that $\{{\bf A(t)}\}$ and $\{{\bf C(t)}\}$ are independent of queue length. (\ref{eq:avg}) follows from the fact that the expected service rate is $\bm{\mu}\otimes \E[\schedule^p]$.
    Therefore, if $\bm{\lambda} < \bm{\mu}\otimes \E[\schedule^p]$, then there exists an $\epsilon, b>0$ and some $\mathcal{M}$ such that
    $$\forall {\bf q}\in\mathbb{Z}_+^K, \Delta V({\bf q})\le -\epsilon + b\mathbf{1}_{{\bf q}\in\mathcal{M}}.$$
    Hence, by Remark \ref{remark:foster-lyapunov}, $\{{\bf Q}(t), t\ge0\}$ is a positive recurrent CTMC if $\bm{\lambda} < \bm{\mu}\otimes \E[\schedule^p]$.
\end{proof}
\section{Proof of Theorem \ref{theorem:msr-difference}}
\label{app:kextra}
\msrdifference*
\begin{proof}
We prove this claim via a coupling argument.
First, observe that the arrival processes and modulating processes are identical between the MSR and MSR-1 systems.
Hence, to compare the two systems on a given sample path, we will assume that arrival times are the same in both systems and that both systems' modulating processes are in corresponding states at any time $t$.

Next, we couple the departure processes of the two systems as follows.  
At any time, $t$, the MSR system attempts to serve $\schedule^{\msrPolicy}_i(t)$ type-$i$ jobs in parallel.
Similarly, the service rate of type-$i$ jobs in the MSR-1 system is $\mu_i\schedule^{\msrPolicy}_i(t)$.
To couple the systems, we set $\schedule^{\msrPolicy}_i(t)$ exponential timers at any time $t$ corresponding to an event (arrival, departure, or change to the modulating process) in either system.
If one of these timers expires before the next event, it triggers a type-$i$ completion in the MSR-1 system as long as there is at least one job to complete.
In the MSR system, we map the type-$i$ jobs in service at time $t$ to one timer each.
Note that this means some timers may not have an associated job.
If a timer expires that is associated with a job in the MSR system, this associated job completes.

We will show that, for any sample path under our coupling, 
\begin{equation}Q_i^{\msronePolicy}(t)\le Q^{\msrPolicy}_i(t) \le Q_i^{\msronePolicy}(t)+\beta_i^{\msrPolicy} \quad \forall t\geq0 \label{eq:induct}\end{equation}
We prove this claim by induction on $t$.
Without loss of generality, we assume that the claim holds at time 0 where $Q_i^{\msronePolicy}(0)= Q^{\msrPolicy}_i(0)=0$.
Consider any time $t$ where \eqref{eq:induct} is assumed to hold for induction.
We will consider two cases.

First, consider the case where $Q_i^{\msrPolicy}(t) < \beta_i^{\msrPolicy}$.
Let $t'$ be the next time after time $t$ where $Q_i^{\msrPolicy}(t) = \beta_i^{\msrPolicy}$.
In the interval $[t,t']$, any completion in the MSR system corresponds to a completion in the MSR-1 system, unless the MSR-1 system has 0 type-$i$ jobs.
Hence, $Q_i^{\msronePolicy}(t)\le Q^{\msrPolicy}_i(t)$ implies that at each successive completion time $c\in[t,t']$, $Q_i^{\msronePolicy}(c)\le Q^{\msrPolicy}_i(c)$.
Because arrivals affect both systems equally, we have that $Q_i^{\msronePolicy}(\tau)\le Q^{\msrPolicy}_i(\tau)$ for each arrival time $\tau\in [t,t']$.
Finally, because $Q_i^{\msrPolicy}(\tau) \leq \beta_i^{\msrPolicy}$ for $\tau\in [t,t']$, it is trivial to see that $Q_i^{\msrPolicy}(\tau) \leq Q_i^{\msronePolicy}(\tau) + \beta_i^{\msrPolicy}$ for $\tau\in [t,t']$.

Second, consider the case where $Q_i^{\msrPolicy}(t) \geq \beta_i^{\msrPolicy}$.
Let $t'$ be the next time after $t$ where $Q_i^{\msrPolicy}(t) < \beta_i^{\msrPolicy}$.
In the interval $[t,t']$, any completion in the MSR system corresponds to a completion in the MSR-1 system, unless the MSR-1 system has 0 type-$i$ jobs.
This once again implies that the first part of the inequality holds on the interval $[t,t']$.
To prove the second part of the inequality, note that now any completion in the MSR-1 system necessarily corresponds to a completion in the MSR system, since the MSR system has a job associated with every timer for the entire interval.
Hence, $Q_i^{\msrPolicy}(t)\le Q^{\msronePolicy}_i(t) + \beta_i^{\msrPolicy}$ implies that at each successive completion time $c\in[t,t']$, $Q_i^{\msrPolicy}(c)\le Q^{\msronePolicy}_i(c) + \beta_i^{\msrPolicy}$ because every decrease in $Q^{\msronePolicy}_i$ corresponds to a simultaneous decrease in $Q^{\msrPolicy}_i$.
Thus, $Q_i^{\msrPolicy}(\tau) \leq Q_i^{\msronePolicy}(\tau) + \beta_i^{\msrPolicy}$ for $\tau\in [t,t']$.

Hence, given that \eqref{eq:induct} holds at time $t$, it also holds on the entire interval $[t,t']$, where $t'$ is at least one event time later than $t$.
This completes the proof of \eqref{eq:induct} by induction.

Given that $\eqref{eq:induct}$ holds for any sample path under our coupling, we have
$$Q_i^{\msronePolicy}\le_{st} Q^{\msrPolicy}_i \le_{st} Q_i^{\msronePolicy}+\beta_i^{\msrPolicy}$$
and thus
$$\E[Q_i^{\msronePolicy}]\le \E[Q^{\msrPolicy}_i] \le \E[Q_i^{\msronePolicy}]+\beta_i^{\msrPolicy}$$
as desired.
\end{proof}
\section{A Queue Length Approximation for the MSR System}
\label{sec:approx}

The slack in the bounds of Theorem \ref{thm:msr-bound} results from the complexity of comparing an $M/M/k$ system with modulated service rates (the MSR system) to an analogous $M/M/1$ system with modulated service rates (the MSR-1 system).
However, when service rates are not modulated, this comparison follows the well-known form
\begin{equation}\E[N^{M/M/k}] = P^{M/M/k}_Q \cdot \E[N^{M/M/1}] + k\rho^{M/M/1}\label{eq:mmk}\end{equation}
where $P^{M/M/k}_Q$ is the probability an arriving job queues in the $M/M/k$ system, $\rho^{M/M/1}$ is the system load of the $M/M/1$ system, and hence the second term in the sum represents the average number of jobs in service in the $M/M/k$ system.
When service rates are generally distributed, previous works \cite{psounis2005systems} also suggest that this difference is difficult to exactly model. 
To derive Theorem \ref{theorem:msr-difference}, we noticed that the MSR and MSR-1 systems perform similarly \emph{during periods where both systems have sufficiently many jobs in their queues}.
As a result, the upper bound from Theorem \ref{theorem:msr-difference} follows a similar form as \eqref{eq:mmk}, but pessimistically bounds $P_Q$ by 1 and bounds the expected number of jobs in service by $\beta_i^{\msrPolicy}$ to relate the MSR and MSR-1 systems.

Although we cannot provide a tighter \emph{bound} on these two quantities, we can provide \emph{approximations} of these quantities, resulting in a queue length approximation that performs well at all loads.
To approximate $P_Q$, we treat the MSR system as if it always tries to serve $k_i^*=\E[u_i^{\msrPolicy}]$ jobs, setting $P_Q=P_Q^{M/M/k_i^*}$.
We approximate the average number of type-$i$ jobs in service as $\rho_i \cdot k_i^*$.
By applying these approximations instead of applying Theorem \ref{theorem:msr-difference} in the proof of Theorem \ref{thm:msr-bound}, we get
$$\E[Q^{\msrPolicy}_i]\approx P_Q^{M/M/k_i^*}\frac{\rho_i+\E[\Delta(\wc_i^{\msronePolicy})]}{1-\rho_i} - \E_U[\Delta(\wc_i^{\msronePolicy})] +\rho_i\cdot k_i^*.$$

Finally, as suggested in \cite{grosof2023reset}, we approximate $\E_U[\Delta(\wc_i^{\msronePolicy})]$ as $\E[\Delta(\wc_i^{\msronePolicy})]$, yielding the following approximation of mean queue length for an MSR-1 system

$$\E[Q^{\msrPolicy}_i]\approx P_Q^{M/M/k_i^*}\frac{\rho_i+\rho_i\E[\Delta(\wc_i^{\msronePolicy})]}{1-\rho_i}+\rho_i\cdot k_i^*.$$

If our prediction lies below or above our lower or upper bounds, respectively, we report the value of the closest bound.
Our prediction and bounds are compared to simulations in Section \ref{sec:preemptions}.

\section{Switching Rates for Preemptive MSR Policies}
\label{sec:switching-infinitely-often}
\begin{theorem}
    For a two-state pMSR policy $p$, where $N^p=2$, the mean queue length upper bound is minimized if we are switching schedules infinitely often. That is, $\alpha=\infty$ minimizes the mean length upper bound across all $p(\alpha)$'s.
\end{theorem}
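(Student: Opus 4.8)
The plan is to isolate the dependence of the Theorem~\ref{thm:msr-bound} upper bound on the switching rate $\alpha$ and to show that it is monotone. Scaling every transition rate of the fixed two-state modulating process by $\alpha$ replaces its generator $\mathbf{G}^{\msrPolicy}$ by $\alpha\mathbf{G}_0$ for a fixed generator $\mathbf{G}_0$, which leaves the limiting distribution $\bm{\pi}^{\msrPolicy}$ unchanged; consequently $\E[u_i^{\msronePolicy}]$, $\rho_i$, $\beta_i^{\msrPolicy}$, and the completion-rate-weighted variable $\wc_i^{\msronePolicy}$ are all independent of $\alpha$. The only $\alpha$-dependent ingredient is the relative-completions vector $\bm{\Delta}$. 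By Lemma~\ref{lemma:solve-delta}, $\bm{\Delta}$ is the unique solution of $\mathbf{G}^{\msrPolicy}\bm{\Delta}=\mathbf{b}$ with $\E[\Delta(m^{\msrPolicy})]=0$, where both $\mathbf{b}$ and this normalization are free of $\alpha$ and the kernel of an irreducible generator is one-dimensional; hence replacing $\mathbf{G}^{\msrPolicy}$ by $\alpha\mathbf{G}_0$ simply divides the solution by $\alpha$. Writing $\bm{\Delta}^{(1)}$ for the $\alpha=1$ solution, so that $\Delta(s)=\Delta^{(1)}(s)/\alpha$, the per-type upper bound of Theorem~\ref{thm:msr-bound} becomes
$$\E[Q_i^{\msrPolicy}]\le\frac{\rho_i}{1-\rho_i}+\beta_i^{\msrPolicy}+\frac{1}{\alpha}\left(\frac{\E[\Delta^{(1)}(\wc_i^{\msronePolicy})]}{1-\rho_i}-\min_s\Delta^{(1)}(s)\right).$$

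It therefore suffices to show the coefficient of $1/\alpha$ is nonnegative for each type $i$; then each per-type bound, and hence the aggregate bound $\frac{1}{\Lambda}\sum_i\lambda_i\E[Q_i^{\msrPolicy}]$ on $\E[Q^{\msrPolicy}]$, is non-increasing in $\alpha$ and is minimized in the limit $\alpha\to\infty$ (strictly decreasing unless the two schedules agree in their type-$i$ component, in which case $\bm{\Delta}=\mathbf{0}$ and the bound is flat). I would establish nonnegativity of the coefficient directly from the two-state structure. Fix $i$ and relabel the two states so that $\modstate{i}{1}\le\modstate{i}{2}$. The per-state identity \eqref{eq:delta-transition} gives $\Delta^{(1)}(1)-\Delta^{(1)}(2)=(\msrstate{i}{1}-\E[u_i^{\msronePolicy}])/g_1\le 0$, where $g_1>0$ is the exit rate of the first state and the inequality holds because $\modstate{i}{1}$ is at most the $\bm{\pi}^{\msrPolicy}$-weighted average schedule; combined with $\pi_1^{\msrPolicy}\Delta^{(1)}(1)+\pi_2^{\msrPolicy}\Delta^{(1)}(2)=0$ and positivity of the $\pi$'s, this forces $\Delta^{(1)}(1)\le 0\le\Delta^{(1)}(2)$, so $\min_s\Delta^{(1)}(s)=\Delta^{(1)}(1)\le 0$. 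Substituting $\pi_2^{\msrPolicy}\Delta^{(1)}(2)=-\pi_1^{\msrPolicy}\Delta^{(1)}(1)$ into the definition of $\wc_i^{\msronePolicy}$ collapses $\E[\Delta^{(1)}(\wc_i^{\msronePolicy})]$ to $\frac{\mu_i}{\E[u_i^{\msronePolicy}]}\pi_1^{\msrPolicy}\Delta^{(1)}(1)(\modstate{i}{1}-\modstate{i}{2})$, a product of two nonpositive factors times a positive constant, hence nonnegative. Since $\rho_i<1$, the coefficient of $1/\alpha$ is at least $-\Delta^{(1)}(1)\ge 0$, which finishes the argument.

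The main obstacle is exactly this last monotonicity step: the coefficient of $1/\alpha$ is not sign-definite for a general modulating process --- with three or more states $\E[\Delta(\wc_i^{\msronePolicy})]$ can be negative --- so the proof genuinely relies on the two-state structure, in which $\bm{\Delta}$ has one coordinate of each sign and the state with the larger schedule is the one carrying the nonnegative value. A smaller technical point to get right is the claimed exact $1/\alpha$ scaling of $\bm{\Delta}$, which uses that the normalization $\E[\Delta(m^{\msrPolicy})]=0$ is itself independent of $\alpha$ and that the one-dimensional indeterminacy in the linear system of Lemma~\ref{lemma:solve-delta} is removed by that normalization; everything else is routine bookkeeping.
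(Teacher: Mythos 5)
Your proposal is correct and follows essentially the same route as the paper's proof: both exploit that scaling the generator by $\alpha$ leaves $\bm{\pi}^{\msrPolicy}$, $\rho_i$, and $\wc_i^{\msronePolicy}$ fixed while making $\bm{\Delta}$ scale as $1/\alpha$, and both then verify that the resulting $O(1/\alpha)$ correction to the Theorem~\ref{thm:msr-bound} upper bound is nonnegative in the two-state case. The only difference is cosmetic --- the paper solves for $\Delta(1),\Delta(2)$ in closed form and reads off the sign, whereas you argue the sign abstractly from $\pi_1\Delta(1)+\pi_2\Delta(2)=0$ and the ordering of the two schedules, which additionally gives monotonicity in $\alpha$ rather than just the limiting statement.
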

\begin{proof}
    Let $\alpha{\bf G}^p=\begin{pmatrix}
        -\alpha x_1 & \alpha x_1\\
        \alpha x_2 & -\alpha x_2
    \end{pmatrix}$ for some $x_1, x_2>0$.
    WLOG, we assume $\msrstate{1}{1}\ge \msrstate{1}{2}$. 
    Let $w_1, w_2$ be two working states of the system.
    By Lemma \ref{lemma:solve-delta}, we get \begin{equation*}
        \begin{cases}
            \begin{pmatrix}
                -\alpha x_1 & \alpha x_1\\
                \alpha x_2 & -\alpha x_2
            \end{pmatrix} (\Delta_1(1), \Delta_1(2))=\E[u_1^{\msronePolicy}](1, 1)-(\msrstate{1}{1}, \msrstate{1}{2})\\
            \Delta_1(1)\bP\{u_1^{\msronePolicy}=w_1\}+\Delta_1(2)\bP\{u_1^{\msronePolicy}=w_2\}=0
        \end{cases}
        \end{equation*}
    Also, by the definition of CTMC, we have $$\begin{cases}
        \begin{pmatrix}
            -\alpha x_1 & \alpha x_1\\
            \alpha x_2 & -\alpha x_2
        \end{pmatrix}\begin{pmatrix}
            \bP\{u_1^{\msronePolicy}=w_1\} & \bP\{u_1^{\msronePolicy}=w_2\}
        \end{pmatrix} = 0
    \end{cases}.$$
    Hence, $$\begin{cases}
        \bP\{u_1^{\msronePolicy}=w_1\}=\frac{x_2}{x_1+x_2}\\
        \bP\{u_1^{\msronePolicy}=w_2\}=\frac{x_1}{x_1+x_2}
    \end{cases}.$$
    By definition of $v_1$, we have $$\begin{cases}
        \bP\{v_1^{\msronePolicy}=w_1\}=\frac{\msrstate{1}{1}}{\E[u_1^{\msronePolicy}]}\frac{x_2}{x_1+x_2}\\
        \bP\{v_1^{\msronePolicy}=w_2\}=\frac{\msrstate{1}{2}}{\E[u_1^{\msronePolicy}]}\frac{x_1}{x_1+x_2}
    \end{cases}.$$
    Solving this system, we get 
        $$\begin{cases}
            \Delta_1(1)=\alpha x_1 \frac{\msrstate{1}{1}-\msrstate{1}{2}}{(x_1+x_2)^2}\\
            \Delta_1(2)=\alpha x_2 \frac{\msrstate{1}{2}-\msrstate{1}{1}}{(x_1+x_2)^2}
        \end{cases}.$$
    Plugging into Theorem \ref{thm:msr-bound}, we get $$\E[Q_1^{\msrPolicy(\alpha)}]=\frac{\rho_1}{1-\rho_1}+\frac{x_1x_2(\msrstate{1}{1}-\msrstate{1}{2})^2}{\alpha\E[u_1^{\msronePolicy}](1-\rho_1)(x_1+x_2)^3}+B,$$
    where $x_2 \frac{\msrstate{1}{2}-\msrstate{1}{1}}{\alpha(x_1+x_2)^2} \le B\le   x_1 \frac{\msrstate{1}{1}-\msrstate{1}{2}}{\alpha(x_1+x_2)^2}$.
    Thus, $$\E[Q_1^{\msrPolicy(\alpha)}]\le \frac{\rho_1}{1-\rho_1}+\frac{x_1x_2(\msrstate{1}{1}-\msrstate{1}{2})^2}{\alpha\E[u_1^{\msronePolicy}](1-\rho_1)(x_1+x_2)^3}+x_1 \frac{\msrstate{1}{1}-\msrstate{1}{2}}{\alpha(x_1+x_2)^2}.$$
    Let $z=\frac{x_1x_2(\msrstate{1}{1}-\msrstate{1}{2})^2}{\alpha\E[u_1^{\msronePolicy}](1-\rho_1)(x_1+x_2)^3}+x_1 \frac{\msrstate{1}{1}-\msrstate{1}{2}}{\alpha(x_1+x_2)^2}$. We can see that $\lim_{\alpha\rightarrow\infty} z = 0$ and that $z\ge 0$. Analysis on $\E[Q_2^{\msrPolicy}]$ upper bound is analogous.
    
    Hence, we have shown that setting $\alpha=\infty$ minimizes this queue length upper bound.
\end{proof}

\section{Proof of Theorem \ref{thm:nmsr-existence}}
\label{app:nmsr}
\nmsrexistence*
\begin{proof}
    From Theorem \ref{thm:throughput-optimal}, we know that there exists an MSR policy $\msrPolicy$ that can stabilize the system.
    Hence, given some MSR policy $\msrPolicy$ that stabilizes the system, we will show how to construct a non-preemptive policy that also stabilizes the system.
    
    Obviously, if $\msrPolicy$ is already non-preemptive, we have proven our claim.
    Otherwise, we can use $\msrPolicy$ to construct an nMSR policy that stabilizes the system.
    
    To do this, we first define $p(\alpha)$ to be the MSR policy that with infinitesimal generator ${\bf G}^{\msrPolicy(\alpha)}=\alpha{\bf G}^{\msrPolicy}$.
    That is, $p(\alpha)$ is an MSR policy whose modulating process is a scaled version of $p$'s modulating process.
    We will create a non-preemptive policy, $\nmsrPolicy(\alpha)$, that is based on $\msrPolicy(\alpha)$. 
    To make $\nmsrPolicy(\alpha)$ non-preemptive, we must add some switching states between the states of $\msrPolicy(\alpha)$'s modulating process.
    The original states of $\msrPolicy(\alpha)$'s modulating process will form the working states of $\nmsrPolicy(\alpha)$.
    We assume that the added switching states form arbitrary (finite length) switching routes between these working states.
    Hence, $\nmsrPolicy(\alpha)$ is a valid nMSR policy.
    We will now show that, for sufficiently small $\alpha$, $\nmsrPolicy(\alpha)$ stabilizes the system.

    Let $w_1, w_2, \cdots, w_{N^{\msrPolicy}}$ denote the working states of $\nmsrPolicy(\alpha)$, and let $s_1, s_2, \cdots, s_{N^{\nmsrPolicy}-N^{\msrPolicy}}$ denote all the switching states of $\nmsrPolicy(\alpha)$. Let $\pi^{\nmsrPolicy(\alpha)}$ denote the stationary distribution of states in the modulating process of $\nmsrPolicy(\alpha)$.
    Let $\pi_{W}^{\nmsrPolicy(\alpha)}$ be the probability that we are in any working state defined as $$\pi_{W}^{\nmsrPolicy(\alpha)}=\sum_{i=1}^{N^{\msrPolicy}}\pi_{w_i}^{\nmsrPolicy(\alpha)}.$$

    By Lemma \ref{thm:existence}, we have $\bm{\mu}\otimes \E[\schedule^{\msrPolicy}] > \bm{\lambda}$.
    Therefore, there exists $\epsilon_1>0$ such that $\bm{\mu}\otimes \E[\schedule^{\msrPolicy}]-\epsilon_1=\bm{\lambda}$.
    Next, we condition $\E[\schedule^{\nmsrPolicy(\alpha)}]$ on whether it is in working state.
    \begin{align*}               \E[\schedule^{\nmsrPolicy(\alpha)}]&=\E[\schedule^{\nmsrPolicy(\alpha)}\mid \text{working state}] \pi_{W}^{\nmsrPolicy(\alpha)}+ \E[\schedule^{\nmsrPolicy(\alpha)}\mid \text{not working state}] (1-\pi_{W}^{\nmsrPolicy(\alpha)})\\
    &\ge \E[\schedule^{\nmsrPolicy(\alpha)}\mid \text{working state}] \pi_{W}^{\nmsrPolicy(\alpha)}\\
    &=\E[\schedule^{\msrPolicy(\alpha)}]\pi_{W}^{\nmsrPolicy(\alpha)}\\
    &=\E[\schedule^{\msrPolicy}]\pi_{W}^{\nmsrPolicy(\alpha)}
    \end{align*}
    Because Lemma \ref{thm:existence} tells us that 
    $\E[\schedule^{\msrPolicy}]-\epsilon_1=\bm{\lambda}$ for some $\epsilon_1 > 0$, it will suffice to show that 
     $\pi_W^{\nmsrPolicy(\alpha)}$ can be arbitrarily close to 1 when $\alpha$ is small. We prove this in Lemma \ref{lemma:epsilon-nmsr}.
    \begin{lemma}
        \label{lemma:epsilon-nmsr}
        For any $\epsilon>0$, there exists an $\alpha$ such that
        $$\pi_{W}^{\nmsrPolicy(\alpha)}> 1-\epsilon$$
    \end{lemma}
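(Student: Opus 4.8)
The plan is to prove Lemma~\ref{lemma:epsilon-nmsr} by a regenerative (renewal--reward) argument that separates the two time scales present in the modulating process of $\nmsrPolicy(\alpha)$: the sojourn in a working state, whose mean grows like $1/\alpha$, versus the time to traverse a switching route, which stays bounded as $\alpha\to 0$.

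First I would record the structural facts that make the $\alpha$-scaling behave as claimed. Taking the switching routes to be internally disjoint, the modulating process of $\nmsrPolicy(\alpha)$ is an irreducible finite-state CTMC on the $N^{\msrPolicy}$ working states $w_1,\dots,w_{N^{\msrPolicy}}$ together with the finitely many switching states lying on the routes. Only the transitions \emph{out of} working states are multiplied by $\alpha$, so the embedded jump chain is unchanged: from $w_i$ the process still heads toward $w_j$ (i.e.\ onto route $(i\to j)$) with probability ${\bf G}^{\msrPolicy}_{i,j}/q_i$, where $q_i=-{\bf G}^{\msrPolicy}_{i,i}$, and the jump behaviour at switching states is untouched. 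Consequently the jump chain of $\nmsrPolicy(\alpha)$, and with it the number $V$ of working states visited between consecutive returns to $w_1$ (with $\E[V]<\infty$, being the return time of a finite irreducible chain), the sequence $w_{i_1},\dots,w_{i_V}$ of states visited, and the routes taken, are all independent of $\alpha$. The holding time in $w_i$ is $\exp(\alpha q_i)$ with mean $1/(\alpha q_i)=\Theta(1/\alpha)$, whereas the holding time in a switching state is $\exp(r)$ for a rate $r>0$ coming from the service rates (a decrease of type-$i$ jobs from $u_i\ge 1$ to $u_i-1$ occurs at rate $\mu_i u_i\ge\mu_i>0$), independent of $\alpha$; since routes are finitely many and of finite length, the expected time to traverse any route is bounded by a constant $M<\infty$ independent of $\alpha$.

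Next I would apply the renewal--reward theorem with the successive visits to $w_1$ as regeneration epochs. A cycle has finite mean length, and $\pi_W^{\nmsrPolicy(\alpha)}$ equals the expected time spent in working states during a cycle divided by the expected cycle length. Conditioning on the ($\alpha$-independent) jump-chain path, the expected working-state time per cycle is $\tfrac{1}{\alpha}\,\E\!\big[\sum_{k=1}^{V}1/q_{i_k}\big]=:c_1/\alpha$ with $c_1\in(0,\infty)$ constant, and the expected switching-state time per cycle, $c_2'$, satisfies $0\le c_2'\le M\,\E[V]=:c_2<\infty$. Hence
$$\pi_W^{\nmsrPolicy(\alpha)}=\frac{c_1/\alpha}{c_1/\alpha+c_2'}\ \ge\ \frac{c_1/\alpha}{c_1/\alpha+c_2}=\frac{c_1}{c_1+\alpha c_2},$$
which tends to $1$ as $\alpha\to 0^+$. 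Given $\epsilon>0$, taking $\alpha<c_1\epsilon/\big(c_2(1-\epsilon)\big)$ (any $\alpha>0$ if $c_2=0$) yields $\pi_W^{\nmsrPolicy(\alpha)}>1-\epsilon$, proving the lemma. Feeding this back into the proof of Theorem~\ref{thm:nmsr-existence}: since $\bm{\mu}\otimes\E[\schedule^{\msrPolicy}]=\bm{\lambda}+\epsilon_1$ componentwise for some strictly positive $\epsilon_1$, choosing $\epsilon$ small relative to $\epsilon_1$ makes $\bm{\mu}\otimes\E[\schedule^{\nmsrPolicy(\alpha)}]\ge\pi_W^{\nmsrPolicy(\alpha)}(\bm{\lambda}+\epsilon_1)>\bm{\lambda}$, so $\nmsrPolicy(\alpha)$ is stable by Lemma~\ref{thm:existence}.

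The step I expect to be the main obstacle is not the renewal--reward estimate but the bookkeeping that guarantees $c_1$, $c_2$, $\E[V]$ and the switching-state rates are genuinely $\alpha$-free and finite --- i.e.\ that the $\alpha$-scaling touches only the out-of-working-state transitions while leaving the entire route structure and its internal timing intact, and that every switching transition has a strictly positive $\alpha$-independent rate so no switching state can trap the process (for nMSR this reduces to $\mu_i u_i\ge\mu_i>0$ along each route; the sMSR analogue of Theorem~\ref{thm:smsr-existence} uses the setup rate $\gamma>0$ in the same role). Once those observations are in place, the two-time-scale computation and the limit $\alpha\to 0$ are routine.
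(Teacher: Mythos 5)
Your proposal is correct and follows essentially the same route as the paper: a renewal--reward argument over cycles between visits to $w_1$, exploiting that only the out-of-working-state rates are scaled by $\alpha$ so the expected working time per cycle is $\Theta(1/\alpha)$ while the expected switching time per cycle is an $\alpha$-independent constant, yielding the same bound $\alpha < \epsilon\,c_1/\bigl((1-\epsilon)c_2\bigr)$. Your additional bookkeeping verifying that the embedded jump chain, route structure, and switching-state rates are $\alpha$-free is a welcome explicit justification of a step the paper treats implicitly.
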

    \begin{proof}
        We will prove the Lemma via the renewal-reward theorem\cite{harchol2013performance}.
        We let the renewal cycle be the time between visits to state $w_1$ by $\schedule^{\nmsrPolicy(\alpha)}(t)$.
        Consider a path $\schedule^{\nmsrPolicy(\alpha)}(t)$ takes, $\mpath$, during a renewal cycle, we use the random variable $X$ to define the cycle length and let the reward function be $R^{\nmsrPolicy(\alpha)}(t)=\bm{1}[\schedule^{\nmsrPolicy(\alpha)}(t) \text{ is in a working state}]$. Let $R^{\nmsrPolicy(\alpha)}$ be the random variable denoting the reward accrued during one renewal cycle.
        Let $S^{\nmsrPolicy(\alpha)}$ be the random variable denoting the time $\schedule^{\nmsrPolicy(\alpha)}$ spends in switching states during a renewal cycle.

        Then, by the renewal-reward theorem \cite{harchol2013performance}, we have
        \begin{align}
            \pi_{W}^{\nmsrPolicy(\alpha)}&=\frac{\E[R^{\nmsrPolicy(\alpha)}]}{\E[X]}\nonumber\\
            &=\frac{\E[R^{\nmsrPolicy(\alpha)}]}{\E[R^{\nmsrPolicy(\alpha)}]+\E[S^{\nmsrPolicy(\alpha)}]}\nonumber \\
            &=\frac{\frac{1}{\alpha}\E[R^{\nmsrPolicy(1)}]}{\frac{1}{\alpha}\E[R^{\nmsrPolicy(1)}]+\E[S^{\nmsrPolicy(1)}]} \label{step:nmsr-scale}\\
            &=\frac{\E[R^{\nmsrPolicy(1)}]}{\E[R^{\nmsrPolicy(1)}]+\alpha\E[S^{\nmsrPolicy(1)}]}.\nonumber
        \end{align}
        Step (\ref{step:nmsr-scale}) follows from \begin{align*}
            \E[R^{\nmsrPolicy(\alpha)}]&=\sum_{\omega} \E[R^{\nmsrPolicy(\alpha)}\mid \omega] \bP\{\omega\}\\
            &=\sum_{\omega} \frac{1}{\alpha}\E[R^{\nmsrPolicy(1)}\mid \omega] \bP\{\omega\}\\
            &=\frac{1}{\alpha} \E[R^{\nmsrPolicy(1)}]
        \end{align*}
        because the transition rates out of the working states are being multiplied by $\alpha$ in $\nmsrPolicy(\alpha)$ compared to $\nmsrPolicy(1)$.
        Therefore, we only need to set the $\alpha$ to be $$0<\alpha<\frac{\epsilon \E[R^{\nmsrPolicy(1)}]}{(1-\epsilon)\E[S^{\nmsrPolicy(1)}]}$$
        such that $\pi_{W}^{\nmsrPolicy(\alpha)}>1-\epsilon$.
    \end{proof}
    From Lemma \ref{lemma:epsilon-nmsr}, we know that there exists some $\alpha$ such that $\pi_{W}^{\nmsrPolicy(\alpha)}=1-\epsilon_2$ holds for a given $\epsilon_2>0$.
    Hence, if we are scaling the transitions out of working states by $\alpha$, we have, \begin{align*}
        \bm{\mu}\otimes \E[\schedule^{\nmsrPolicy}]&=\bm{\mu}\otimes \E[\schedule^{\msrPolicy}]\pi_{W}(\alpha)\\
        &>\bm{\mu}\otimes \E[\schedule^{\msrPolicy}](1-\epsilon_2)\\
        &=(\bm{\lambda}+\epsilon_1)(1-\epsilon_2)\\
        &=\bm{\lambda}+\epsilon_1-\epsilon_2-\epsilon_1\epsilon_2.
    \end{align*}
    In this case, we can set $\epsilon_3=\epsilon_1-\epsilon_2-\epsilon_1\epsilon_2$. We need $\epsilon_3=\epsilon_1(1-\epsilon_2)-\epsilon_2>0$. Therefore, given $\epsilon_1$, we can set $\epsilon_2$ to be $\frac{\epsilon_2}{1-\epsilon_2}<\epsilon_1$.

    Lastly, by invoking Lemma \ref{thm:existence}, we have shown that the constructed nMSR policy with the $\alpha$ associated with $\epsilon_2$ can stabilize the system.
\end{proof}
\section{Example System from Section \ref{sec:preemptions}}
\begin{figure*}[b]
    \centering
    \begin{subfigure}[t]{1\linewidth}
        \centering
        \begin{tikzpicture}[->,>=stealth,line width=0.5pt,node distance=2.5cm]
           \node [circle,draw] (zero) {$w_{1,4,0}$};
           \node [circle,draw] (three)[right of=zero] {$w_{0,0,2}$};
           \path (zero) edge [bend left] node [above] {$\alpha{\bf G}_{w_{1,4,0},w_{0,0,2}}$} (three);
           \path (three) edge [bend left] node [below] {$\alpha{\bf G}_{w_{0,0,2}, w_{1,4,0}}$} (zero);
        \end{tikzpicture}
        \caption{The modulating process of the pMSR policy, $\msrPolicy(\alpha)$}
        \label{fig:mp:pmsr}
    \end{subfigure}
    \begin{subfigure}[t]{1\linewidth}
        \centering
        \begin{tikzpicture}[->,>=stealth,line width=0.5pt,node distance=1.8cm]
           \node [circle,draw] (zero) {$w_{1,4,0}$};
           \node [circle,draw] (half)[right of=zero] {$t_{1, 4, 0}$};
           \node [circle,draw] (one)[right of=half] {$t_{0, 4, 0}$};
           \node [circle,draw] (two)[right of=one] {$t_{0, 3, 0}$};
           \node [circle,draw] (four)[below of=two] {$t_{0,0,2}$};
           \node [circle,draw] (five)[left of=four] {$t_{0,0,1}$};
           \node [circle,draw] (six)[right of=two] {$t_{0,2,0}$};
           \node [circle,draw] (seven)[right of=six] {$t_{0,1,0}$};
           \node [circle,draw] (three)[right of=seven] {$w_{0,0,2}$};
           \path (zero) edge [bend left] node [above] {$\alpha{\bf G}_{w_{1,4,0},w_{0,0,2}}$} (half);
           \path (half) edge [bend left] node [above] {$\mu_1$} (one);
           \path (one) edge [bend left] node [above] {$4\mu_2$} (two);
           \path (two) edge [bend left] node [above] {$3\mu_2$} (six);
           \path (six) edge [bend left] node [above] {$2\mu_2$} (seven);
           \path (seven) edge [bend left] node [above] {$\mu_2$} (three);
           \path (three) edge [bend left] node [below right] {$\alpha{\bf G}_{w_{0,0,2}, w_{1,4,0}}$} (four);
           \path (four) edge [bend left] node [below] {$2\mu_3$} (five);
           \path (five) edge [bend left] node [below left] {$\mu_3$} (zero);
        \end{tikzpicture}
        \caption{The modulating process of the nMSR policy, $\nmsrPolicy(\alpha)$}
        \label{fig:mp:nmsr}
        \label{fig:nmsr-states}
    \end{subfigure}
    \begin{subfigure}[b]{1\linewidth}
        \centering
        \begin{tikzpicture}[->,>=stealth,line width=0.5pt,node distance=1.8cm]
           \node [circle,draw] (zero) {$w_{1,4,0}$};
           \node [circle,draw] (half)[right of=zero] {$t_5$};
           \node [circle,draw] (one)[right of=half] {$t_4$};
           \node [circle,draw] (two)[right of=one] {$t_3$};
           \node [circle,draw] (four)[below of=two] {$t_2'$};
           \node [circle,draw] (five)[left of=four] {$t_1'$};
           \node [circle,draw] (six)[right of=two] {$t_2$};
           \node [circle,draw] (seven)[right of=six] {$t_1$};
           \node [circle,draw] (three)[right of=seven] {$w_{0,0,2}$};
           \path (zero) edge [bend left] node [above] {$\alpha{\bf G}_{w_{1,4,0},w_{0,0,2}}$} (half);
           \path (half) edge [bend left] node [above] {$5\gamma$} (one);
           \path (one) edge [bend left] node [above] {$4\gamma$} (two);
           \path (two) edge [bend left] node [above] {$3\gamma$} (six);
           \path (six) edge [bend left] node [above] {$2\gamma$} (seven);
           \path (seven) edge [bend left] node [above] {$\gamma$} (three);
           \path (three) edge [bend left] node [below right] {$\alpha{\bf G}_{w_{0,0,2}, w_{1,4,0}}$} (four);
           \path (four) edge [bend left] node [below] {$2\gamma$} (five);
           \path (five) edge [bend left] node [below left] {$\gamma$} (zero);
        \end{tikzpicture}
        \caption{The modulating process of the sMSR policy, $\smsrPolicy(\alpha)$}
        \label{fig:mp:smsr}
        \label{fig:smsr-states}
    \end{subfigure}
    \caption{Here we depict the structure of the modulating processes for $\msrPolicy(\alpha)$, $\nmsrPolicy(\alpha)$ and $\smsrPolicy(\alpha)$. These policies are used for the examples described in Section \ref{sec:preemptions}.  States marked with $w$ are working states, and states marked with $t$ are switching states. In (a) and (b), the subscripts denote the state's corresponding schedule. In (c), the subscripts for switching states denote the number of jobs experiencing a setup time due to being preempted.}
\label{fig:mc}
\end{figure*}
\label{sec:example}
To evaluate our MSR policies in Section \ref{sec:preemptions}, we consider a server with three resource types (WLOG, we call the resources CPU cores, DRAM, and network bandwidth).
Our example server has 20 cores, 15 GB of DRAM and 50 Gbps of network bandwidth.
We assume there are three job types with demands 
$${\bf D}_1 = (3,7,1) \quad {\bf D}_2 = (4,1,1) \quad {\bf D}_3=(10,1,5).$$
When the system load, $\rho$, equals 1, we let 
$$\bm{\lambda} = (.5, 2, 1) \quad \bm{\mu}=(1, 1, 1).$$
When we vary the system load in our experiments, we keep $\bm{\mu}$ fixed, and scale ${\bm\lambda}$ linearly such that
$${\bm\lambda} = \rho \cdot (.5, 2, 1).$$
Note that the arrival rate vector was chosen such that $\rho < 1$ implies $\bm{\lambda} \in \mathcal{C}$.

This example was chosen to be non-trivial in two key ways.  
First, some possible schedules in this example are Pareto optimal, but do not lie on the boundary of the convex hull of $\mathcal{S}$.
For example, the schedule $(1,1,1)$ cannot fit an additional job of any type without removing some jobs from the schedule, but it does not lie on the convex boundary of $\mathcal{S}$.
Hence, although the schedule $(1,1,1)$ seems to pack the server efficiently, choosing this schedule will not help stabilize the system when system load is sufficiently high.

Second, the possible schedules on the convex boundary of $\mathcal{S}$ are not trivially close together.
Hence, in cases where preemption is limited, nMSR and sMSR policies will need non-trivial switching routes to move between working states.
The modulating processes for the pMSR, nMSR, and sMSR policies we construct based on this example are shown in Figure \ref{fig:mc}.
\section{Phase-Type Job Sizes}
\label{sec:appendix:phase-type}
In the modulating processes of pMSRs and sMSRs, the state transitions do not coincide with job completions.
Therefore, given jobs whose sizes follow a different distribution, the modulating process will not change.
As described in \cite{grosof2024analysis}, the RESET and MARC techniques used in the proof of Theorem \ref{thm:msr-bound} can be expanded to handle the case where job sizes follow phase-type distribution.
Hence, our queue length bounds from this theorem will continue to hold in this case.
Specifically, when dealing with phase-type distributions, 
the analysis of the MSR-1 system requires computing relative completions terms for each phase of type-$i$ jobs.
Hence, while the computation of our bounds is still straightforward in this case, there is some added complexity to having these additional job phases.

The story is more complicated in the nMSR case, where the modulating process relies on job completions to transition between working states.
Here, many additional states need to be added to the modulating process to capture all the possible switching routes that the nMSR policy may be forced to follow between working states.
Hence, the size of the modulating process might grow exponentially with the number of job phases.
Finding a compact representation of the modulating process for nMSR policies with phase-type jobs remains an open problem that we defer to future work.

\section{Proof of Theorem \ref{thm:ht}}
\label{sec:ht}
Our goal is to compare an MSR policy to MaxWeight in a heavy traffic scaling regime where $\rho \to 1$.
Due to the definition of $\rho$, there are many possible ways to achieve this scaling.
In Definition \ref{def:ht}, we define an intuitive heavy-traffic scaling regime where the arrival rate vector is scaled linearly to increase system load.

\begin{definition}\label{def:ht}
Consider a system with initial arrival rate vector $\bm{\lambda}$ and initial system load $\rho_0$.
Let $\bm{\tilde{\lambda}} = \frac{\bm{\lambda}}{\rho_0}$.
Here, $\bm{\tilde{\lambda}}$ represents the supremum of the set of scaled arrival rate vectors under which the system may be stable.
We consider the system in the limit as the arrival rate vector scales linearly to $\bm{\tilde{\lambda}}$.

Formally, for any value of system load, $\rho$, let the scaled arrival rate, $\bm{\lambda}(\rho)$, be defined as
$$\bm{\lambda}(\rho) = \bm{\tilde{\lambda}}\cdot \rho.$$
We consider the heavy-traffic limit as $\rho \to 1$.
\end{definition}

Using this definition of heavy traffic, we prove the following theorem.

\competitiveratio*
\begin{proof}

First, we prove that the solution to \eqref{eq:optimization} is the same at all loads.
This follows directly from Definition \ref{def:ht}, as \eqref{eq:optimization} is effectively finding a policy $p^*$ whose average completion rate is $\tilde{\bm\lambda}$.
Because $\tilde{\bm\lambda}$ is the same at all loads, the policy $p^*$ does not depend on $\rho$.

To prove that $p^*$ is constant competitive, we note that Lemma \ref{lemma:solve-delta} tells us $\E[\Delta(v_i^{p^{*}-1})]$ is only dependent on ${\bf G}$, the infinitesimal generator of of the modulating process.
Since the policy $p^*$ does not change with $\rho$, $\E[\Delta(v_i^{p^{*}-1})]$ is a constant.
Therefore, $\E[Q_i^{p^*-1}]$ scales as $\Theta(1/(1-\rho_i))$.
Our optimization objective in \eqref{eq:optimization} ensures that $\rho=\rho_i$ for all $i$, therefore $\E[Q^{p^*}]$ scales as $\Theta(1/(1-\rho))=\Theta(1/\epsilon)$. 
Hence, to prove that $p^*$ is constant-competitive with MaxWeight, it suffices to show that $\E[Q^{\textrm{MaxWeight}}]$ also scales as $\Theta(1/(1-\rho))$.

It is shown in \cite{eryilmaz2012asymptotically} that $\E[Q^{\textrm{MaxWeight}}]$ scales as $\Theta(1/\epsilon)$ in the limit as $\epsilon \to 0$, where $\epsilon$ is the minimum distance between ${\bm\lambda}$ and the faces of $Conv(\mathcal{S})$.
When $\epsilon$ is sufficiently small, the face of $Conv(\mathcal{S})$ closest to ${\bm\lambda}$ does not change as $\rho\to 1$.
Hence, for sufficiently large $\rho$, let $\theta$ be the angle between the vector ${\bm\lambda}$ and the normal vector of the closest face on $Conv(\mathcal{S})$.
Here, $\epsilon=(1-\rho)\cos(\theta)$.
Because $\epsilon$ and $1-\rho$ differ by at most a constant factor, $\E[Q^{\textrm{MaxWeight}}]$ scales as $\Theta(1/(1-\rho))$ as $\rho \to 1$.  Hence, $p^*$ is constant-competitive with MaxWeight.
\end{proof}
\section{Description of the Google Borg Trace}
\label{sec:trace}
Our trace is based on data collected from Google Borg over 30 days in 2019 \cite{tirmazi2020borg}.
The Borg dataset is divided into 8 cells that represent collections of machines in 8 different time zones.
The trace describes job arrival times, sizes, and the normalized CPU and memory demands of each job.

We considered data from one cell.
We first group the jobs in the trace into distinct job types, grouping together jobs whose CPU and memory demands are within 0.1\%.
While this resulted in hundreds of thousands of job types, the ten most frequent job types accounted for 51\% of all requests.
Hence, to construct a tractable scenario, we consider these ten most popular types.
We further filtered out jobs that were killed prematurely or did not finish during the observation window of the trace.

The trace of the top ten job types contains 69 million requests.  
The average job size across all types was 310 seconds.
The job size distribution is far from exponential -- the distribution's failure rate is initially increasing and then decreases quickly after 50 seconds.
The arrival process in the trace is highly variable and often includes large batches of arriving jobs.

\section{An Example on the Initialization Step of MSR}
\label{sec:init-example}
Consider a server with 128 cores of CPU, 256 GiB of memory, 1TiB of disk space, and 100Gbps of network.
Table \ref{table:example-init} lists the 4 types of virtual machines some cloud company wants to sell.
For simplicity, we assume the mean job sizes are all 1 second for all job types.
That is, ${\bm\mu}=(1,1,1,1)$.
The measurements at the company tell us we get 4, 5, 2, 1.5 virtual machine requests per second for type-1, type-2, type-3, type-4 virtual machines, respectively.
We can therefore plug these numbers into equation \eqref{eq:optimization} to solve for the candidate set and the fraction of time we need to spend in each candidate.
In this case, the capacity of the system is ${\bf P}=(128, 256, 1024, 100)$, the demand matrix is $${\bf D} = \begin{pmatrix}
    1 & 4 & 50 & 1\\
    4 & 1 & 10 & 10\\
    2 & 2 & 100 & 5\\
    8 & 4 & 10 & 1
\end{pmatrix},$$
and the arrival rate vector being ${\bm\lambda}=(4,5,2,1.5)$.

It takes the Gurobi solver under 1ms to solve for this MIQCP. The results are:
$${\bf W} = \begin{pmatrix}
    0 & 7 & 4 & 10\\
    0 & 5 & 9 & 5\\
    0 & 10 & 0 & 0\\
    10 & 9 & 0 & 0
\end{pmatrix}$$
and 
$${\bm \pi} = \begin{pmatrix}
    0.07633588 & 0.30534351 & 0.00763359 & 0.61068702
\end{pmatrix}.$$

\begin{table}[ht]
    \begin{tabular}{l|lllll}
      Type & CPU     & Memory & Disk   & Network & Demand\\
      \hline 
    Type-1 & 1 core  & 4GiB   & 50GiB  & 1Gbps & 4 requests/sec\\
    Type-2 & 4 cores & 1GiB   & 10GiB  & 10Gbps & 5 requests/sec\\
    Type-3 & 2 cores & 2GiB   & 100GiB & 5Gbps  & 2 requests/sec\\
    Type-4 & 8 cores & 4GiB   & 10GiB  & 1Gbps  & 1.5 requests/sec
    \end{tabular}
    \caption{The resource demands and arrival rates of the 4 job types in the example system.}
    \label{table:example-init}
\end{table}

\end{document}